\theoremstyle{plain}
\newtheorem{theorem}{Theorem}
\newtheorem{lemma}[theorem]{Lemma}
\theoremstyle{definition}
\newtheorem{definition}[theorem]{Definition}
\newtheorem{remark}[theorem]{Remark}
\newtheorem{question}[theorem]{Question}
\newcommand{\Z}{\mathbb{Z}}
\newcommand{\Q}{\mathbb{Q}}
\newcommand{\F}{\mathbb{F}}
\newcommand{\Aut}{\mathrm{Aut}}
\newcommand{\wt}{\mathrm{wt}}
\newcommand{\supp}{\mathrm{supp}}
\journal{Finite Fields and their Applications}
\begin{document}

\begin{frontmatter}

\title{Some new results on the self-dual [120,60,24] code}

\author{Martino~Borello}

\address{Universit\'e Paris 13, Sorbonne Paris Cit\'e, LAGA, CNRS, UMR 7539, Universit\'e Paris 8, F-93430, Villetaneuse,
France}

\author{Javier~de~la~Cruz}

\address{Universidad del Norte, Barranquilla, Colombia and University
of Zurich, Switzerland}

\begin{abstract}
The existence of an extremal self-dual binary linear code of length
120 is a long-standing open problem. We continue the investigation
of its automorphism group, proving that automorphisms of order 30
and 57 cannot occur. Supposing the involutions acting fixed point
freely, we show that also automorphisms of order 8 cannot occur and
the automorphism group is of order at most 120, with further
restrictions. Finally, we present some necessary conditions for the
existence of the code, based on shadow and design theory.
\end{abstract}

\begin{keyword}
Self-dual code \sep extremal code \sep automorphism group
\end{keyword}

\end{frontmatter}

\section{Introduction}
In coding theory, binary self-dual codes play a central role: they are linear codes with a rich algebraic structure, good decoding properties and relations with other areas of mathematics, such as group theory, lattice theory and design theory. For example, this class includes the binary extended Golay code, whose automorphism group is the sporadic simple group $\rm M_{24}$ and which is related to the Leech lattice.\\
Gleason, Pierce and Turyn  showed (see \cite{Turyn}) that if a
natural number $r>1$ divides the weight of all codewords of a binary
self-dual code, then $r=2$ (\textit{even} code) or $r=4$
(\textit{doubly-even} code). Every binary self-dual code is even. If
a binary self-dual code is even but not doubly-even
(\textit{singly-even} code), then it is called a \textit{Type I}
code, while if a binary self-dual code is doubly-even, then it is
called a \textit{Type II} code. Type II codes exist only for lengths
which are multiples of $8$ \cite{Gleason} and Mallows and Sloane
showed in \cite{Mallows} that they have minimum distance bounded by
$4\lfloor{n/24}\rfloor+4$, where $n$ is the length. A type II code
attaining this bound is called \textit{extremal} code. Among
extremal codes, those of length a multiple of $24$ are particularly
interesting: Assmus-Mattson's theorem \cite{Assmus-paper} guarantees
that the supports of their codewords of a fixed nonzero weight form
a $5$-design. Moreover, they have relations, as mentioned above,
with simple groups and extremal lattices. Zhang proved in
\cite{Zhang} that their length is at most $3672$.

Despite their theoretical importance, only two extremal codes of
length a multiple of $24$ are known, namely the famous binary
extended Golay code, the unique up to equivalence of length $24$,
and the extended quadratic residue code of length $48$, which is the
unique up to equivalence of this length. In 1973 Sloane
\cite{pregunta} posed explicitly the question: is there a self-dual
$[72, 36, 16]$ code? Since then, multiple attempts to establish the
non existence of such a code or to present a construction have been
done, till now unsuccessfully. The problem is still open for all
lengths from $72$ to $3672$ and many investigations have been also
done for the cases of length $96$ and $120$.

This paper focuses on the last one, i.e. on the study of a self-dual
$[120, 60, 24]$ code. In particular, in Section \ref{background} we
will collect, for the reader's convenience, all the definitions and
the known results which will be used in the following. In Section
\ref{section-autom} we prove new properties about the automorphism
group of a self-dual $[120,60,24]$ code. In particular we exclude
the existence of automorphisms of order $30$ and $57$ and we
investigate the structure of the automorphism group, in the case
that involutions act fixed point freely (see the introduction of
Subsection \ref{ssfpf} for a motivation of this choice), proving
that it is either trivial or isomorphic to a group of order at most
$120$, with further restrictions. Finally, in Section
\ref{section-necess} we give necessary conditions for the existence
of the code, based on shadow and design theory.

\section{Background}\label{background}

In this section we collect some classical results of coding theory
which are useful in the rest of the paper.

\subsection{Gleason's theorem and the shadow of a code}

For the whole subsection, let $C$ be a binary code of length $n$,
i.e. a subspace of $\F_2^n$. We recall that a $[n,k,d]$ code is a
code of length $n$, dimension $k$ and minimum distance $d$.

\begin{definition}
The \textit{weight distribution} of $C$ is the sequence $(A_0(C),$ $\ldots,$ $A_n(C))$, where $A_i(C)$ is the number of codewords of $C$ of weight $i$, for every $i\in \{1,\ldots,n\}$.\\
The polynomial $W_{C}(y):=\sum_{i=0}^n A_iy^{i}=\sum_{c\in C}
y^{\wt(c)}\in \Z[x]$ is called the \textit{weight enumerator} of $C$
and the polynomial $W_C(x,y):=x^nW(\tfrac{y}{x})\in\Z[x,y],$ is the
\textit{homogeneous weight enumerator} of $C$.
\end{definition}

\begin{definition}
The \textit{dual} of $C$ is $C^\perp:=\{v\in \F_2^n\ | \ \langle v,c\rangle=0, \ \forall c\in C\}$.\\
If $C=C^\perp$, we say that $C$ is self-dual.\\
If $C$ and $C^\perp$ have the same weight enumerator, $C$ is called
a \textit{formally self-dual} code.
\end{definition}

\begin{theorem}[\cite{Gleason}]\label{Gleason-theorem}
Let $g_1(x,y):=y^2+x^2$, $g_2(x, y):=x^{2}y^{2}(x^{2}-y^{2})^{2}$,
and $g_3(x,
y):=y^{24}+759x^8y^{16}+2576x^{12}y^{12}+759x^{16}y^8+x^{24}$.
\begin{itemize}
  \item[{\rm (a)}]  If $C$ is formally self-dual and even, $$W_C(x,y)=\sum_{i=0}^{\lfloor n /8 \rfloor}a_ig_1(x,y)^{\frac{n}{2}-4i}g_2(x,y)^i.$$
  \item[{\rm (b)}]  If $C$ is formally self-dual and doubly-even, $$W_C(x,y)=\sum_{i=0}^{\lfloor n /24 \rfloor}a_ig_2(x,y)^{\frac{n}{8}-3i}g_3(x,y)^i.$$
\end{itemize}
In all cases, every $a_i\in \Q$ and $\sum_ia_i=1.$
\end{theorem}

Let $C$ be a self-dual code and let $C_{0}$ be the subset consisting
of all codewords in $C$ whose weights are multiples of $4$. If $C$
is of type II then $C_0=C$, while $C_0$ is a subcode of index $2$ of
$C$ if $C$ is of type I.

\begin{definition}  The \textit{shadow} of $C$ is the set
$$S:=\left\{%
\begin{array}{ll}
    C^{\perp}_{0}\backslash C, & \hbox{if $C$ is of type I} \\
    C, & \hbox{if $C$ is of type II.} \\
\end{array}%
\right.    $$
\end{definition}

Let $C$ be a type I code. Since $C_0$ is of index $2$, then
$\#(C_0^\perp/C_0)=4$. Hence there are three cosets $C_1,  C_2, C_3$
of $C_0$ in $C_0^\perp$ such that $C_{0}^{\perp}= C_0 \cup C_1 \cup
C_2 \cup C_3$, where $C = C_0 \cup C_2$ and $S = C_1 \cup C_3 =
C_0^{\perp} \backslash C$ is the shadow of $C$ (see \cite[Theorem
5]{conway-sloane}).

\begin{theorem}[\cite{conway-sloane}]\label{polynomial-type-I}  Let $S$ be the shadow of $C$, code of type I.
\begin{itemize}
  \item[{\rm (a)}] If we write $$W_{C}(x,y)=\sum^{\lfloor
n/8\rfloor}_{j=0} a_{j}(x^{2}+y^{2})^{ \frac{n}{2}
-4j}(x^{2}y^{2}(x^{2}-y^{2})^{2})^{j},$$ for suitable rationals
$a_j$, then $$W_{S}(x,y)=\sum^{\lfloor
n/8\rfloor}_{j=0}(-1)^{j}a_{j}2^{\frac{n}{2}-6j}(xy)^{\frac{n}{2}-4j}(x^{4}-y^{4})^{2j}.$$
  \item[{\rm (b)}] Writing $W_S(x,y)=\sum_{i=0}^{n}B_{i}x^{n-i}y^{i}$, we have
\begin{itemize}
          \item[{\rm (i)}] $B_{i}=B_{n-i}$ for all $i.$
          \item[{\rm (ii)}] $B_{i}=0,$ unless $i\equiv n/2 \mod 4.$
       \item[{\rm (iii)}] $B_{0}=0.$
        \item[{\rm (iv)}] $B_{i}\leq1,$ for $i<d/2.$
         \item[{\rm (v)}] at most one $B_{i}$ is nonzero for $i < (d+4) / 2$.
    \end{itemize}
\end{itemize}
\end{theorem}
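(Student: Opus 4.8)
The plan is to derive part (a) from the MacWilliams identity applied to the doubly-even subcode $C_0$, and then to read off the coefficient properties in part (b) from the resulting closed formula together with elementary coset and minimum-distance arguments.

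First I would record that the map $\phi\colon C\to\F_2$, $\phi(c):=\tfrac12\wt(c)\bmod 2$, is a linear functional: since $C$ is self-dual, any two codewords overlap in an even number of coordinates, so $\wt(c+c')=\wt(c)+\wt(c')-2|\supp(c)\cap\supp(c')|$ gives $\phi(c+c')=\phi(c)+\phi(c')$. Its kernel is exactly $C_0$, and separating $C=C_0\cup C_2$ according to weight $\bmod 4$ yields
$$W_{C_0}(x,y)=\tfrac12\bigl(W_C(x,y)+W_C(x,\sqrt{-1}\,y)\bigr),$$
because replacing $y$ by $\sqrt{-1}\,y$ weights a codeword of weight $w$ by $(\sqrt{-1})^{w}=(-1)^{w/2}$. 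Applying the MacWilliams transform to $C_0$ (with $|C_0|=2^{n/2-1}$), using self-duality in the form $W_C(x+y,x-y)=2^{n/2}W_C(x,y)$, and recalling from the coset description preceding the theorem that $W_{C_0^\perp}=W_C+W_S$, I expect to obtain the single clean identity
$$W_S(x,y)=2^{-n/2}\,W_C\bigl(x+y,\ \sqrt{-1}\,(x-y)\bigr).$$
To finish (a) I would substitute the Gleason decomposition of $W_C$ into this identity: a direct computation shows that under $x\mapsto x+y$, $y\mapsto\sqrt{-1}\,(x-y)$ the generators transform as $x^2+y^2\mapsto 4xy$ and $x^2y^2(x^2-y^2)^2\mapsto -4(x^4-y^4)^2$, so substituting termwise, collecting the powers of $4=2^2$, and multiplying by $2^{-n/2}$ reproduces exactly the claimed expression. (As a sanity check, setting $x=y=1$ gives $W_S(1,1)=2^{n/2}=|S|$, consistent with $S$ being a coset of $C$.)

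For (b), properties (i) and (ii) follow by inspecting the closed formula: the factor $(xy)^{n/2-4j}(x^4-y^4)^{2j}$ is symmetric in $x,y$ since the exponent $2j$ is even, giving $B_i=B_{n-i}$, and every monomial has $y$-degree $\equiv n/2-4j\equiv n/2\pmod 4$, giving $B_i=0$ unless $i\equiv n/2\pmod 4$. Property (iii) is immediate since $0\in C$ while $S=C_0^\perp\setminus C$ is disjoint from $C$. The substantive points (iv) and (v) I would treat together using that $S$ is a single coset of $C$, so the sum of any two shadow vectors lies in $C$. If $s_1\neq s_2$ both have weight $i<d/2$, then $s_1+s_2\in C\setminus\{0\}$ has weight $\le 2i<d$, impossible; hence $B_i\le1$. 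If instead $s_1,s_2$ have distinct weights $i_1<i_2<(d+4)/2$, then by (ii) these weights differ by a multiple of $4$, so $i_1\le i_2-4$ and $\wt(s_1+s_2)\le i_1+i_2\le 2i_2-4<d$, again forcing $s_1+s_2=0$, a contradiction; so at most one $B_i$ is nonzero in that range.

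The main obstacle is establishing the transform identity in (a) correctly, namely pinning down the substitution $y\mapsto\sqrt{-1}\,(x-y)$ and the normalizing factor $2^{-n/2}$ through the doubly-even subcode, as a sign or scaling error there propagates into every coefficient. Once that identity is in hand, the explicit shadow enumerator and all of (b) follow by routine computation and short combinatorial arguments.
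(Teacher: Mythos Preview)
Your argument is correct. However, note that the paper itself does not supply a proof of this theorem: it is stated in the background section as a cited result from Conway and Sloane, so there is no in-paper proof to compare against. That said, your derivation---extracting the doubly-even subcode via $y\mapsto\sqrt{-1}\,y$, applying MacWilliams to $C_0$, and simplifying with the self-duality relation $W_C(x+y,x-y)=2^{n/2}W_C(x,y)$ to reach $W_S(x,y)=2^{-n/2}W_C(x+y,\sqrt{-1}\,(x-y))$---is exactly the classical argument of \cite{conway-sloane}, and your verification of the generator substitutions and of (b)(i)--(v) is sound. One small clarification worth making explicit: when you say ``$S$ is a single coset of $C$'', this is because $[C_0^\perp:C]=2$, so $C_0^\perp=C\cup S$; hence for any $s_1,s_2\in S$ one has $s_1+s_2\in C$, which is precisely what your arguments for (iv) and (v) use.
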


\begin{definition}
If $C$ is a self-dual $[n, n/2, d]$ code with $d > 2$, pick two
positions and consider the $(n/2-1)$-dimensional subcode
$C^{\prime}$ of $C$ with either two $0$s or two $1$s in these
positions.  If we puncture $C^{\prime}$ on these positions, we
obtain a self-dual code $C^{\prime \ast}$ of length $n-2$;
$C^{\prime \ast}$ is called a \textit{child} of $C$ and $C$ is
called a \textit{parent} of $C^{\prime \ast}$.
\end{definition}

\begin{theorem}[\cite{Pless}] \label{child} Let $m \geq1$ be an integer.
  If $C$ is a $[24m-2, 12m-1, 4m+2]$ type I code whose shadow has minimum distance $4m+3$, then $C$ is a child of a $[24m, 12m, 4m+4]$ type II code.
  \end{theorem}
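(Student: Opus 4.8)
The plan is to reconstruct the Type II parent $D$ directly by attaching two new coordinates to $C$, guided by the coset decomposition $C_0^\perp = C_0 \cup C_1 \cup C_2 \cup C_3$ recalled above, in which $C = C_0 \cup C_2$ and $S = C_1 \cup C_3$. Fixing representatives, I would define a linear map $\phi\colon C_0^\perp \to \F_2^{24m}$ by $\phi(v) = (v, f(v))$, where $f$ depends only on the coset of $v$ and is given by $f(C_0)=00$, $f(C_2)=11$, $f(C_1)=10$, $f(C_3)=01$. Since $C_1 + C_2 = C_3$ as cosets, $f$ descends to an isomorphism $C_0^\perp/C_0 \xrightarrow{\sim} \F_2^2$, so $\phi$ is injective; I then set $D := \phi(C_0^\perp)$ and aim to show that $D$ is the desired parent.

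First I would pin down the parameters and the type. As $C_0$ has index $2$ in $C$, we get $\dim C_0 = 12m-2$ and $\dim C_0^\perp = 12m$, so $D$ has length $24m$ and dimension $12m$. Double-evenness follows coset by coset from the weight congruences: codewords of $C_0$ have weight $\equiv 0$, those of $C_2$ weight $\equiv 2$, and those of $S$ weight $\equiv 3 \pmod 4$ (the last by Theorem \ref{polynomial-type-I}(b)(ii), since $n/2 = 12m-1 \equiv 3 \pmod 4$); the appended bits $00,11,10,01$ then lift each weight to a multiple of $4$.

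The heart of the argument is self-orthogonality. For $u,v \in C_0^\perp$ one has $\langle \phi(u),\phi(v)\rangle = \langle u,v\rangle + \langle f(u),f(v)\rangle$, and $\langle u,v\rangle \bmod 2$ depends only on the cosets, defining a bilinear form on $C_0^\perp/C_0 \cong \F_2^2$. Using that $C$ is self-dual with $S = C_0^\perp\setminus C$, I would compute this form in the basis $(C_1,C_2)$: the diagonal entries are the parities of the odd shadow weight and the even weight in $C_2$, giving $1$ and $0$, while the off-diagonal entry must be $1$ because $x_1\in C_1$ pairs trivially with all of $C_0$ but $x_1\notin C = C^\perp$. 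Hence the Gram matrix is $\left(\begin{smallmatrix} 1 & 1\\ 1 & 0\end{smallmatrix}\right)$, which is exactly the standard inner product of the glue vectors $10$ and $11$. The two terms therefore cancel in $\F_2$, so $D$ is self-orthogonal, and by the dimension count it is Type II self-dual of length $24m$.

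For the minimum distance I would treat the three coset types separately: $(c,00)$ with $c\in C_0$ has weight at least $4m+4$ (the smallest multiple of $4$ that is $\geq$ the minimum distance $4m+2$ of $C$); $(c,11)$ with $c\in C_2$ has weight at least $(4m+2)+2 = 4m+4$; and $(s,\cdot)$ with $s\in S$ has weight at least $(4m+3)+1 = 4m+4$, where the shadow estimate is precisely the hypothesis. Thus $d(D)\geq 4m+4$, and the Mallows--Sloane bound $4\lfloor 24m/24\rfloor + 4 = 4m+4$ forces equality. Finally, applying the child construction to $D$ on the two new coordinates, the subcode with equal entries there is exactly $\phi(C_0\cup C_2)=\phi(C)$, which punctures back to $C$; hence $C$ is a child of $D$. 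I expect the main obstacle to be the self-orthogonality step, namely computing the discriminant form on $C_0^\perp/C_0$ and checking that it agrees with the form carried by the glue vectors.
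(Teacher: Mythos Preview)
The paper does not give its own proof of this statement: Theorem~\ref{child} is quoted from \cite{Pless} (Kennedy--Pless) without argument, so there is nothing in the paper to compare against.

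Your proposed argument is the standard Brualdi--Pless/Kennedy--Pless glue construction and is correct as outlined. The key points all check: the quotient $C_0^\perp/C_0$ is elementary abelian of order $4$ (characteristic $2$ forces this), the weight congruences modulo $4$ on $C_0$, $C_2$ and $S$ are exactly as you state (using Theorem~\ref{polynomial-type-I}(b)(ii) with $n/2=12m-1\equiv 3\pmod 4$), and the Gram matrix computation on $C_0^\perp/C_0$ in the basis $(C_1,C_2)$ matches the inner products of the glue vectors $10$, $11$. The minimum distance case analysis and the identification of $C$ as the child obtained by puncturing on the two appended coordinates are also fine. One cosmetic point: you may want to note explicitly that the labeling of $C_1$ and $C_3$ inside the shadow is an arbitrary choice, so picking $f(C_1)=10$ and $f(C_3)=01$ rather than the reverse is just fixing that choice; nothing in the argument depends on it.
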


\begin{lemma} [\cite{Pless}] \label{same-polynomial} If $C$ is a child of an extremal type II code with shadow $S=C_{1}\cup C_{3}$, then
 $W_{C_{1}}(y)=W_{C_{3}}(y)$.
\end{lemma}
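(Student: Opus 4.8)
The plan is to realise the two shadow cosets as the punctured images of the two ``pattern classes'' of the parent that disagree on the deleted coordinates, and then to read off their equal weight distributions from the design property of the extremal parent. Write $D$ for the extremal type II parent, of length $24m$, and suppose $C$ is obtained by deleting a pair of coordinates $\{i,j\}$ from the subcode $D'=\{c\in D: c_i=c_j\}$. I would split $D$ according to its values on $\{i,j\}$ into the four classes $D_{00},D_{11}$ (the patterns with $c_i=c_j$, together forming $D'$) and $A_{01},A_{10}$ (the patterns with $c_i\ne c_j$). Since $D$ has minimum distance $4m+4>2$, no nonzero codeword is supported on $\{i,j\}$, so deleting $\{i,j\}$ is injective on all of $D$.

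First I would identify the four cosets of $C_0$ in $C_0^\perp$ with these classes. Tracking weights modulo $4$ (here the double-evenness of $D$ is used) gives $\overline{D_{00}}=C_0$ and $\overline{D_{11}}=C_2$, so that $C=C_0\cup C_2=\overline{D'}$. Using the duality between shortening and puncturing for the self-dual code $D$, one has that $C_0=\overline{D_{00}}$ is the shortening of $D$ at $\{i,j\}$, whence dually $C_0^\perp=\overline D$ is the full puncturing. Therefore the shadow is $S=C_0^\perp\setminus C=\overline{A_{01}}\cup\overline{A_{10}}$, and $\{C_1,C_3\}=\{\overline{A_{01}},\overline{A_{10}}\}$.

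Every codeword in $A_{01}\cup A_{10}$ carries exactly one $1$ on $\{i,j\}$, so deletion lowers its weight by exactly $1$; consequently $W_{\overline{A_{01}}}(y)=y^{-1}W_{A_{01}}(y)$ and $W_{\overline{A_{10}}}(y)=y^{-1}W_{A_{10}}(y)$, and the lemma reduces to the claim $W_{A_{01}}(y)=W_{A_{10}}(y)$, i.e. that for each $w$ the parent has as many weight-$w$ codewords of pattern $(0,1)$ on $\{i,j\}$ as of pattern $(1,0)$. This is where extremality enters: by the Assmus--Mattson theorem the supports of the weight-$w$ codewords of $D$ form a design, in particular a $1$-design, so each coordinate lies in the same number $r_w$ of them. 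Writing $\mu_w$ for the number of weight-$w$ codewords with $1$'s in both positions $i$ and $j$, the count with pattern $(0,1)$ is $r_w-\mu_w$ (those through $j$ minus those through both) and the count with pattern $(1,0)$ is likewise $r_w-\mu_w$. Hence the two counts agree for every $w$, giving $W_{A_{01}}=W_{A_{10}}$ and therefore $W_{C_1}=W_{C_3}$.

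I expect the bookkeeping of the second paragraph --- correctly matching $\overline{A_{01}},\overline{A_{10}}$ to the nontrivial cosets $C_1,C_3$ via the shortening/puncturing duality and the weight-modulo-$4$ analysis --- to be the delicate part; once the two shadow cosets are exhibited as the punctured unequal-pattern classes, the $1$-design computation forcing their weight enumerators to coincide is short. It is worth noting that only the $1$-design conclusion of Assmus--Mattson is needed, which is available for every extremal type II code, so the stronger $5$-design property of the length-$24m$ parent is not required here.
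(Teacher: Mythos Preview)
The paper does not supply its own proof of this lemma; it is simply quoted from Kennedy--Pless \cite{Pless}, so there is no in-paper argument to compare against. Your proof is correct and is essentially the standard one. The identification $C_0^\perp=\overline{D}$ via shortening/puncturing duality, together with injectivity of puncturing (from $d>2$), cleanly yields $\{C_1,C_3\}=\{\overline{A_{01}},\overline{A_{10}}\}$, and the $1$-design count then forces equal weight enumerators. The bookkeeping you worried about is routine: $\overline{A_{01}}$ and $\overline{A_{10}}$ are distinct cosets of $C_0$ because any coincidence $\overline a=\overline b$ with $a\in A_{01}$, $b\in A_{10}$ would produce a weight-$2$ codeword $a+b\in D$. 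Your final remark is also accurate: the Assmus--Mattson hypothesis with $t=1$ is satisfied by every extremal type~II code (not only those of length $24m$), so the lemma holds in that generality.
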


\begin{lemma} [\cite{formally-dual}]  \label{prop1} Let $C$ be a type I code of length $n$ with the shadow $S=C_{1}\cup C_{3}$.
Suppose that $n\equiv 2 \mod 4$. Let $C^{*}$ be the code of length
$n+2$ obtained by extending $C_{0}^{\perp}$ as follows:
  $$(0,0,C_{0}),(1,0,C_{2}),(0,1,C_{1}),(1,1,C_{3}).$$
    If $W_{C_{1}}(y)=W_{C_{3}}(y)$, then $C^{*}$ is a formally self-dual code with weight enumerator    $$W_{C_{0}}(y)+y(W_{C_{1}}(y)+W_{C_{2}}(y))+y^{2}W_{C_{3}}(y).$$
\end{lemma}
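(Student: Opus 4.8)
The plan is to verify formal self-duality through the MacWilliams transform, after first recording the (easy) weight enumerator. First I would check that $C^{*}$ is genuinely linear: the two-bit prefix is governed by the isomorphism $\phi\colon C_{0}^{\perp}/C_{0}\to\F_2^2$ with $C_{0}\mapsto 00$, $C_{2}\mapsto 10$, $C_{1}\mapsto 01$, $C_{3}\mapsto 11$, and since the four cosets add like $(\Z/2)^2$ (so $C_{1}+C_{2}=C_{3}$, etc.), the map $u\mapsto(\phi(u+C_{0}),u)$ is $\F_2$-linear; thus $C^{*}$ is a subspace of dimension $\dim C_{0}^{\perp}=n/2+1$, the correct dimension for a formally self-dual code of length $n+2$. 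Writing $W_{i}:=W_{C_{i}}(x,y)$ for the homogeneous coset enumerators and bookkeeping the prefixes gives $W_{C^{*}}(x,y)=x^{2}W_{0}+xy(W_{1}+W_{2})+y^{2}W_{3}$, which at $x=1$ is the asserted enumerator.

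Next I would reduce formal self-duality to an identity among the $W_{i}$. By MacWilliams, $C^{*}$ is formally self-dual iff $W_{C^{*}}(x+y,x-y)=|C^{*}|\,W_{C^{*}}(x,y)=2^{n/2+1}W_{C^{*}}(x,y)$, so I need the transforms $W_{j}(x+y,x-y)$ expressed back in the $W_{i}$. These come from the coset (character) form of MacWilliams, i.e. Poisson summation applied to $C_{0}\subseteq C_{0}^{\perp}$: since $C_{0}$ is self-orthogonal, the inner product descends to a symmetric bilinear form $\beta$ on $V:=C_{0}^{\perp}/C_{0}\cong(\Z/2)^2$, and for each coset $C_{j}$ one gets $\sum_{i}(-1)^{\beta(\bar C_{j},\bar C_{i})}W_{i}=\tfrac{1}{|C_{0}|}W_{j}(x+y,x-y)$.

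The crux, and the main obstacle, is pinning down $\beta$, i.e. the signs $(-1)^{\beta(\bar C_{j},\bar C_{i})}$. I would compute it from two facts: (a) since $C=C^{\perp}=C_{0}\cup C_{2}$, a vector $v\in C_{0}^{\perp}$ lies in $C$ iff $\beta(\bar v,\bar C_{2})=0$, which fixes all pairings with $\bar C_{2}$; and (b) $\beta(\bar v,\bar v)=\wt(v)\bmod 2$, where $C_{0},C_{2}$ have even weight (weights $\equiv 0,2 \bmod 4$) while the shadow cosets $C_{1},C_{3}$ have weight $\equiv n/2\bmod 4$, which is \emph{odd} because $n\equiv 2\bmod 4$ (Theorem \ref{polynomial-type-I}(b)(ii)). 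Bilinearity together with $\bar C_{3}=\bar C_{1}+\bar C_{2}$ then determines every entry; in particular $\beta(\bar C_{1},\bar C_{1})=\beta(\bar C_{3},\bar C_{3})=1$ and $\beta(\bar C_{1},\bar C_{3})=0$. Getting these parities right is essential, since they are exactly what controls the dependence on the hypothesis.

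Finally, substituting $W_{C^{*}}(x,y)=x^{2}W_{0}+xy(W_{1}+W_{2})+y^{2}W_{3}$ into $W_{C^{*}}(x+y,x-y)$, replacing each $\tilde W_{j}:=W_{j}(x+y,x-y)$ by the four relations above, and collecting terms, the coefficients of $W_{0}$ and $W_{2}$ reduce to $2^{n/2+1}$ times their counterparts automatically, while the $W_{1}$- and $W_{3}$-coefficients come out \emph{interchanged} relative to $W_{C^{*}}$. Hence $W_{C^{*}}(x+y,x-y)=2^{n/2+1}W_{C^{*}}(x,y)$ holds precisely when $W_{1}=W_{3}$, i.e. under the hypothesis $W_{C_{1}}(y)=W_{C_{3}}(y)$ (which for equal-length codes is the same as $W_{1}=W_{3}$ homogeneously). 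This gives $W_{(C^{*})^{\perp}}=W_{C^{*}}$, proving $C^{*}$ is formally self-dual. The only genuine computation is this last collection of terms, which is routine once $\beta$ is correct.
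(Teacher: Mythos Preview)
Your argument is correct. The paper itself does not prove this lemma; it is quoted verbatim from \cite{formally-dual} (Betsumiya--Harada) and used as a black box. So there is no ``paper's own proof'' to compare against.

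For the record, your computation checks out. With $|C_0|=2^{n/2-1}$ and the form $\beta$ you determined (namely $\beta(\bar C_1,\bar C_1)=\beta(\bar C_3,\bar C_3)=1$, $\beta(\bar C_1,\bar C_2)=\beta(\bar C_3,\bar C_2)=1$, $\beta(\bar C_1,\bar C_3)=\beta(\bar C_2,\bar C_2)=0$), the four coset transforms plug into $(x+y)^2\tilde W_0+(x^2-y^2)(\tilde W_1+\tilde W_2)+(x-y)^2\tilde W_3$ to give
\[
W_{C^{*}}(x+y,x-y)=2^{n/2+1}\bigl(x^{2}W_{0}+xy\,W_{2}+xy\,W_{3}+y^{2}W_{1}\bigr),
\]
which agrees with $2^{n/2+1}W_{C^{*}}(x,y)$ exactly when $W_{1}=W_{3}$, as you claim. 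The use of $n\equiv 2\bmod 4$ to force the shadow weights odd (hence $\beta(\bar C_1,\bar C_1)=1$) is precisely where that hypothesis enters, and without it the sign pattern in $\tilde W_1,\tilde W_3$ would change and the swap of $W_1,W_3$ would not occur in the same way. One cosmetic remark: you might note explicitly that $C^{*}$ is \emph{not} in general self-dual (indeed it is usually odd), so ``formally'' is doing real work here; your argument already respects this since you only ever compare weight enumerators.
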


\begin{definition}
Two self-dual codes of length $n$ are  \textit{neighbors} if their
intersection is a code of dimension $n/2-1$.
\end{definition}

\subsection{Automorphism group of binary codes}
The symmetric group $\textrm{S}_n$ acts on $\F_2^n$ by the group
action $v\sigma:=(v_{\sigma^{-1}(1)}, \ldots,$
$v_{\sigma^{-1}(n)})$, where $v = (v_1, \ldots, v_n) \in \F_2^n$ and
$\sigma \in \textrm{S}_n$.

\begin{definition}\label{equivalent}
 Let $C$ and $C^{ \prime}$ be two codes of the same length $n$. We say that $C$ and $C^{\prime}$ are \textit{equivalent} and denote $C \sim C'$ if only if $C \sigma = C'$ where $\sigma\in \textrm{S}_n $. If $v \sigma \in C$ for all $v\in C$, then $\sigma$ is an \textit{automorphism} of $C$. The set of all automorphisms of $C$ is a group, denoted $\Aut(C)$.
\end{definition}

\begin{definition}
Let $C$ be a binary code of length $n$ and $\sigma\in \Aut(C)$.
\begin{itemize}
  \item[{\rm (a)}] If $\sigma$ is of prime order $p$, we say that $\sigma$ is of \textit{type} $p$-$(c;f)$ if it has $c$ cycles of length $p$ and $f$ fixed points. \item[{\rm (b)}] If $\sigma$ is of order $p\cdot r$, where $p,r$ are distinct primes, then we say that $\sigma$ is of \textit{type} $p \cdot r$-$(s_1, s_2, s_3;f)$ if $\sigma$ has $s_1$ $p$-cycles, $s_2$ $r$-cycles, $s_3$ $pr$-cycles and $f$ fixed points.
\end{itemize}
\end{definition}

\begin{remark} In order to simplify the notation, if $\sigma$ is an automorphism of composite order $r$ and has $c$ $r$-cycles and $f$ fixed points with $n=c\cdot r+f$, then we say that the cycle structure of $\sigma$ is $r$-$(c;f)$.

\end{remark}

Let us first prove a result which is useful in the following
sections.

\begin{lemma}\label{lemmaorder}
Let $C$ be a code of length $n$, such that all automorphisms of
prime order $p$ act fixed point freely. If $|\Aut(C)|=p^am$, with
$(p,m)=1$, then $a \leq \max \{ \ r \in \mathbb{Z} \ : \ p^r \mid n
\ \}$.
\end{lemma}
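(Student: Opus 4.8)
The plan is to pass to a Sylow $p$-subgroup of $\Aut(C)$ and show that it acts freely on the $n$ coordinate positions. Recall that, by Definition \ref{equivalent}, $\Aut(C)$ is a subgroup of the symmetric group $\mathrm{S}_n$ acting on the coordinate set $\{1,\ldots,n\}$. If $a=0$ the claimed inequality is trivial, so I would assume $a\ge 1$ and fix a Sylow $p$-subgroup $P\le\Aut(C)$ with $|P|=p^a$; the strategy is then to analyse the orbits of the induced action of $P$ on $\{1,\ldots,n\}$ via the orbit--stabilizer theorem.

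The key step is to prove that $P$ acts semiregularly, meaning that every point stabilizer $P_i$ (for $i\in\{1,\ldots,n\}$) is trivial. I would argue by contradiction: if some $P_i$ were nontrivial, then, being a nontrivial $p$-group, it would contain by Cauchy's theorem an element $\tau$ of order exactly $p$. This $\tau\in\Aut(C)$ would then have order $p$ and fix the coordinate $i$, contradicting the hypothesis that all automorphisms of order $p$ act fixed point freely. Hence every stabilizer $P_i$ must be trivial.

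It then only remains to count. By the orbit--stabilizer theorem, each orbit of $P$ on $\{1,\ldots,n\}$ has cardinality $|P|/|P_i|=p^a$. Since these orbits form a partition of the coordinate set, $n$ is a multiple of $p^a$, i.e. $p^a\mid n$, which is precisely the assertion $a\le\max\{r\in\Z : p^r\mid n\}$.

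I expect the only delicate point to be the reduction from a nontrivial point stabilizer to an honest order-$p$ automorphism possessing a fixed point, which is exactly where Cauchy's theorem enters; everything else is a routine orbit-counting argument. It is worth noting that when $a\ge 1$ the group $P$ is nontrivial, so the fixed-point-free hypothesis is automatically non-vacuous and the argument never runs empty.
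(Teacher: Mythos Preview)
Your argument is correct and is essentially identical to the paper's own proof: both pass to a Sylow $p$-subgroup, use the fixed-point-free hypothesis to conclude that every orbit has full size $p^a$, and deduce $p^a\mid n$. You simply make explicit the step (via Cauchy's theorem and triviality of point stabilizers) that the paper leaves implicit.
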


\begin{proof} Suppose $a > \max \{ \ r \in \mathbb{Z} \ : \ p^r \mid n \ \}$. By Sylow's theorem, there exists a subgroup $H \leq \Aut(C)$ with $|H|=p^a$. The group $H$ acts on the set $\{1,\ldots, n\}$. Since all automorphisms of order $p$ act fixed point freely, then each orbit has $p^a$ elements. Therefore $p^a \mid n$, a contradiction.
\end{proof}

\begin{definition}
Let $\sigma\in \Aut(C)$. The \textit{fixed code} of $\sigma$ is
$$F_{\sigma}(C):=\{v\in C\mid v\sigma=v\}.$$
Let $\Omega_1,\ldots,\Omega_{c}$ be the cycle sets and let
$\Omega_{c+1},\ldots,\Omega_{c+f}$ be the fixed points of $\sigma$.
Clearly $v \in F_{\sigma}(C)$ if and only if $v \in C$ and $v$ is
constant on each cycle. Let $\pi_\sigma:F_{\sigma}(C)\rightarrow
\F_2^{c+f}$ denotes the \textit{projection map} defined by
$\pi_\sigma(v|_{\Omega_{i}})=v_{j}$ for some $j\in \Omega_{i}$ and
$i\{1,\ldots,c+f\}$.
\end{definition}

A useful result, which is a reformulation of a very classical result
about group actions, is the following.

\begin{lemma}\label{coefficients}
If $\sigma\in \Aut(C)$, $W_C(y)=\sum A_iy^i$ and
$W_{F_\sigma(C)}(y)=\sum A_i^Fy^i$, then $A_{i}\equiv A_{i}^F$ {\rm
mod} $p$.
\end{lemma}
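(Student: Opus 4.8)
The plan is to read both $A_i$ and $A_i^F$ as orbit counts for the cyclic group $G:=\langle\sigma\rangle$ acting on the codewords of a fixed weight, and then to invoke the elementary fact that a group of prime order $p$ partitions any finite set it acts on into orbits of size $1$ or $p$. Throughout I take $\sigma$ to be of prime order $p$, as in the rest of the paper, so that $|G|=p$; this is what makes the congruence ``mod $p$'' meaningful.

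First I would fix $i$ and set $C_i:=\{c\in C:\wt(c)=i\}$, so that $A_i=|C_i|$. Since $\sigma$ acts on $\F_2^n$ by permuting coordinates, it preserves the Hamming weight; hence $c\sigma\in C_i$ whenever $c\in C_i$, and $G$ therefore acts on the finite set $C_i$.

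Next I would analyse the orbits of this action. By the orbit-stabiliser theorem every $G$-orbit on $C_i$ has cardinality dividing $|G|=p$, hence equal to $1$ or $p$. An orbit is a singleton exactly when its element $c$ satisfies $c\sigma=c$, that is, when $c\in F_{\sigma}(C)$; such a $c$ is then precisely a weight-$i$ codeword of the fixed code, and conversely. Consequently the number of singleton orbits on $C_i$ equals $A_i^F$, the coefficient of $y^i$ in $W_{F_{\sigma}(C)}(y)$.

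Finally, writing $C_i$ as the disjoint union of its $G$-orbits and letting $k_i$ denote the number of orbits of length $p$, I obtain $A_i=A_i^F+p\,k_i$, whence $A_i\equiv A_i^F \pmod{p}$. I do not expect any substantive obstacle here: the only point that needs care is the identification of the singleton orbits with the weight-$i$ part of $F_{\sigma}(C)$, which is immediate from the definition of the fixed code, together with the observation that permutation automorphisms are weight-preserving. I would also note that the very same argument goes through verbatim when $\sigma$ has order a power of $p$, since then every nontrivial orbit length is a positive power of $p$ and hence divisible by $p$.
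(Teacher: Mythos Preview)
Your argument is correct and is exactly the ``very classical result about group actions'' the paper alludes to; in fact the paper does not give a proof of this lemma at all, merely stating it as a reformulation of that classical fact. Your orbit-counting argument (orbits of $\langle\sigma\rangle$ on the weight-$i$ codewords have size $1$ or $p$, and the singletons are precisely the weight-$i$ elements of $F_\sigma(C)$) is the intended justification.
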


Finally, let us introduce a classical decomposition of a code with
an automorphism of prime order, which comes from Maschke's theorem.
Let $p$ be an odd prime and $\sigma$ is an automorphism of type
$p$-$(c,f)$. Let
$$ E_{\sigma}(C):=\{v\in C\mid \wt(v|_{\Omega_{i}})\equiv 0\mod 2, \ \ i=1,\ldots,c+f\},$$
where $v|_{\Omega_{i}}$ is the restriction of $v$ on $\Omega_{i}$.

\begin{lemma}[\cite{Huffman}]
If $p$ is odd, then $C=F_{\sigma}(C)\oplus E_{\sigma}(C)$. Moreover,
if $C$ is self-dual, then
\begin{enumerate}
\item[{\rm (a)}] the code $\pi_\sigma(F_{\sigma}(C))\leq \F_2^{c+f}$ is self-dual and, if $C$ is doubly even and $p \equiv 1 \mod p$, then $\pi_\sigma(F_{\sigma}(C))$ is doubly even.
\item[{\rm (b)}] $\dim \,
E_{\sigma}(C)=\frac{(p-1)c}{2}$.
\end{enumerate}
\end{lemma}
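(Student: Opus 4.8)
The plan is to realise this decomposition as the isotypic splitting furnished by Maschke's theorem, made explicit through the norm element, and then to transport self-duality across that splitting. I would work inside the group algebra $\F_2[\langle\sigma\rangle]\cong\F_2[x]/(x^p-1)$ and set $N:=\sum_{i=0}^{p-1}\sigma^i$. Since $\sigma N=N$, one gets $N^2=pN=N$ in characteristic $2$ (this is where $p$ odd enters, via $p\equiv 1\bmod 2$), so $N$ is an idempotent acting on the $\sigma$-invariant space $C$. I would first identify its image and kernel: $Nv$ is $\sigma$-fixed, and conversely $v\in F_\sigma(C)$ gives $Nv=pv=v$, whence $NC=F_\sigma(C)$; computing $Nv$ coordinatewise shows its value on a cycle $\Omega_i$ equals $\wt(v|_{\Omega_i})\bmod 2$ and on a fixed point equals $v$ itself, so $(1-N)C=\ker N\cap C=E_\sigma(C)$. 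The idempotent identity $v=Nv+(1-N)v$ then yields $C=F_\sigma(C)\oplus E_\sigma(C)$ directly, the sum being direct because a vector constant on cycles and of even weight on each cycle must vanish when $p$ is odd.

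For the self-duality in (a) I would set up the ambient orthogonal decomposition $\F_2^n=V_1\perp V_2$, where $V_1$ is the space of vectors constant on each cycle and $V_2$ the space of vectors of even weight on each cycle and zero on the fixed points; these are $F_\sigma(\F_2^n)$ and $E_\sigma(\F_2^n)$, they are orthogonal (pairing a cycle-constant vector with an even-on-cycles vector gives $\sum_i u_i\,\wt(w|_{\Omega_i})=0$), and each carries a non-degenerate form. The key observation is that $\pi_\sigma$ is an isometry of $V_1$ onto $\F_2^{c+f}$ with its standard form: pairing two cycle-constant vectors picks up a factor $p$ on each length-$p$ cycle, which is $1$ in $\F_2$, and a factor $1$ on each fixed point. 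Since $F_\sigma(C)=C\cap V_1$ and $E_\sigma(C)=C\cap V_2$, I would invoke the elementary fact that if a self-dual $C\leq V=V_1\perp V_2$ splits as $(C\cap V_1)\oplus(C\cap V_2)$, then each $C\cap V_i$ is self-dual in $V_i$. This is proved by checking $C^\perp=(C\cap V_1)^{\perp_{V_1}}\oplus(C\cap V_2)^{\perp_{V_2}}$ and comparing the $V_1$- and $V_2$-components with $C=C^\perp$. Transporting through the isometry $\pi_\sigma$ then shows $\pi_\sigma(F_\sigma(C))$ is self-dual.

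Part (b) falls out as a dimension count: self-duality of $\pi_\sigma(F_\sigma(C))$ in $\F_2^{c+f}$ forces $\dim F_\sigma(C)=(c+f)/2$ (note $\pi_\sigma$ is injective on $F_\sigma(C)$, since such vectors are determined by one value per cycle), so $\dim E_\sigma(C)=n/2-(c+f)/2=(pc+f-c-f)/2=(p-1)c/2$. For the doubly-even refinement I would compare weights: for a cycle-constant $u\in F_\sigma(C)$ one has $\wt(u)=pa+b$, where $a$ counts the $1$-valued cycles and $b$ the $1$-valued fixed points, while $\wt(\pi_\sigma(u))=a+b$; thus $\wt(u)-\wt(\pi_\sigma(u))=(p-1)a$, and when $p\equiv 1\bmod 4$ this is divisible by $4$, so $\wt(u)\equiv 0\bmod 4$ forces $\wt(\pi_\sigma(u))\equiv 0\bmod 4$. (The hypothesis printed as $p\equiv 1\bmod p$ should read $p\equiv 1\bmod 4$.)

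The step I expect to be the real crux is the self-duality transfer in (a): it is immediate from $C\subseteq C^\perp$ that $\pi_\sigma(F_\sigma(C))$ is self-orthogonal, but upgrading this to self-duality (equivalently, pinning down the dimension as exactly $(c+f)/2$) requires the orthogonal-splitting lemma together with a careful verification of its hypotheses, namely that the restricted form is non-degenerate on each $V_i$ and that $C$ genuinely decomposes along $V_1\perp V_2$. The oddness of $p$ is used repeatedly here, so isolating exactly where it is needed is the part demanding the most care.
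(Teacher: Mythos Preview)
The paper does not prove this lemma; it cites it from Huffman \cite{Huffman} and only remarks that the decomposition ``comes from Maschke's theorem.'' Your argument is correct and is precisely the standard Huffman/Maschke approach the paper is alluding to: the norm idempotent $N=\sum_{i=0}^{p-1}\sigma^i$ realises the splitting, the ambient orthogonal decomposition $\F_2^n=V_1\perp V_2$ together with the splitting lemma transfers self-duality, and the dimension and weight congruences fall out as you describe (including your correction of the typo $p\equiv 1\bmod p$ to $p\equiv 1\bmod 4$).
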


\subsection{Designs and codes}
In this section we briefly recall the main definitions of design
theory and its relationship with coding theory.

\begin{definition}
A $t$-$(v, k, \lambda)$ design, or briefly a \textit{$t$-design}, is
a pair $\mathcal{D}= (\mathcal{P}, \mathcal{B})$ where $\mathcal{P}$
is a set of $v$ elements, called \textit{points}, and $\mathcal{B}$
is a collection of distinct subsets of $\mathcal{P}$ of size $k$,
called \textit{blocks}, such that every subset of points of size $t$
is contained in precisely $\lambda$ blocks.
\end{definition}

If $\mathcal{D}$ is a $t$-$(v, k, \lambda)$ design, it is also an
$i$-$(v, k, \lambda_{i})$ design for all $i\in\{0,\ldots, t\}$,
where $\lambda_{i}$ is given by $\lambda_{i}=\lambda    \frac{\binom
{v-i} {t-i}}{\binom {k-i} {t-i}}.$

\begin{definition}
Let $\mathcal{D}=( \mathcal{P}, \mathcal{B})$ be a design with
$|\mathcal{P}|=v$ and $|\mathcal{B}|=b$.
\begin{itemize}
  \item[(a)]
 If we list the points $\{p_{1},p_{2}, \ldots ,p_{v}\}$ and the blocks $\{B_{1}, B_{2}, \ldots, B_{b}\}$, then we define the incidence matrix of $\mathcal{D}$ as a $b\times v$ matrix $A=(a_{ij})$, where
$$a_{ij}=
\left\{
  \begin{array}{ll}
    1, & \hbox{if $p_{j} \in B_{i}$} \\
    0, & \hbox{if $p_{j}\not \in B_{i}$.}
  \end{array}
\right.$$
  \item[(b)] The code $C(\mathcal{D})$ over $\mathbb{F}_{2}$ which is generated by the rows of $A$ is called \textit{code of the design} $\mathcal{D}=( \mathcal{P}, \mathcal{B})$.
\end{itemize}
\end{definition}

Although there are several incidence matrices, depending on the
choice of the order of the points and of the blocks, for a given
design the codes generated by these matrices are equivalent:
changing the order of the points is equivalent to permute the
coordinates, while a reordering of the blocks does not change the
code (see \cite[p.41]{AssmusKey}).

The following theorem, due to Assmus and Mattson, establishes a
relationship between coding theory and design theory.

\begin{theorem}[\cite{Assmus-paper}]\label{Assmus} Let $C$ be a binary $[n, k, d]$ code. Suppose $C^\perp$ has minimum weight $d^\perp$. Suppose that $A_0,\ldots,A_n$ and $A_0^{\perp},\ldots,A_n^\perp$ are the weight distributions of $C$ and $C^\perp$, respectively. Fix a positive integer $t$ with $t < d$, and let $s$ be the number of $i$ with $A_i^{\perp} \neq 0$ for $i\in \{0\ldots n-t\}$. Suppose $s \leq d-t$.
\begin{itemize}
  \item[{\rm (a)}] The vectors of weight $i$ in $C$ form a $t$-design provided $A_i\neq 0$ and $d \leq i\leq n$.
  \item[{\rm (b)}] The vectors of weight $i$ in $C^{\perp}$ form a $t$-design provided $A_i^{\perp}$ and $d^{\perp}\leq i\leq n-t.$
\end{itemize}
\end{theorem}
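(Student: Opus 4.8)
The statement to be proved is the Assmus--Mattson theorem, so the plan is to follow the classical route through shortened codes. First I would invoke the combinatorial criterion that a family of equal-sized blocks on an $n$-set is a $t$-design exactly when, for every $t$-subset $T$ of coordinates, the number of blocks containing $T$ is a constant independent of $T$. Applying this to the supports of the weight-$i$ codewords, and writing $\lambda_i(T)$ for the number of weight-$i$ codewords that are $1$ on all of $T$, an inclusion--exclusion over the subsets of $T$ gives
\[
\lambda_i(T)=\sum_{S\subseteq T}(-1)^{|S|}A_i(C_S),
\]
where $C_S$ denotes $C$ shortened on $S$ (the codewords vanishing on $S$, with the $S$-coordinates deleted) and $A_i(C_S)$ counts its weight-$i$ words. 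Hence it suffices to show that for every $S$ with $|S|\le t$ the weight distribution of $C_S$ is independent of the choice of $S$; the same reduction applied to $C^\perp$ handles part (b).

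The heart of the matter is therefore a determinacy lemma: under $s\le d-t$, the weight distribution of each shortened code is forced by the global data $\{A_i\}$, $\{A_i^\perp\}$. I would establish it with the MacWilliams/Pless machinery, taking part (b) as the model. Shorten $C^\perp$ on $S$ with $|S|=j\le t$. Since shortening deletes only zero coordinates, every nonzero weight of $C^\perp_S$ is already a nonzero weight of $C^\perp$, so its distribution carries at most $s$ unknown multiplicities in the relevant range. The dual of $C^\perp_S$ is the code $\pi_S(C)$ obtained by puncturing $C$ on $S$; because $C$ has minimum distance $d>t\ge j$, no nonzero codeword is supported inside $S$, so puncturing is injective and the low-order coefficients $A_0,\dots,A_{d-t-1}$ of $\pi_S(C)$ equal $1,0,\dots,0$, independently of $S$. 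By the Pless power moments, the moments of orders $0,\dots,d-t-1$ of $C^\perp_S$ depend only on these coefficients, hence are also $S$-independent. Reading these $d-t$ moment identities as a linear system in the $\le s$ unknown weight multiplicities yields a Vandermonde system in the (distinct) occurring weights; since $s\le d-t$ it has full column rank and a unique solution. Thus the weight distribution of $C^\perp_S$ does not depend on $S$, which by the reduction proves (b).

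For part (a) I would run the parallel analysis on $C_S$, the shortening of $C$ itself (length $n-j$ and minimum weight $\ge d$, since $t<d$), whose distribution is linked by the MacWilliams transform to that of the punctured dual $\pi_S(C^\perp)$, again aiming to pin it down from the $S$-independent power moments that the minimum-distance hypotheses supply. The step I expect to be the main obstacle is precisely this determinacy bookkeeping: one must verify that the number of free parameters (occupied weights) in the shortened distribution never exceeds the number of power moments that the minimum-distance conditions force to be constant in $S$, and that the associated moment matrix is nonsingular, so that $s\le d-t$ genuinely produces uniqueness rather than mere numerical feasibility. The remaining points --- injectivity of the puncturing maps when $t<d$, the boundary weights near $n-t$, and keeping track of the admissible ranges $d\le i\le n$ in (a) and $d^\perp\le i\le n-t$ in (b) so that the inclusion--exclusion identity applies --- are routine once the determinacy lemma is in place.
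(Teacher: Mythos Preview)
The paper does not prove this theorem at all: it is quoted as a background result with a citation to the original Assmus--Mattson paper \cite{Assmus-paper}, and no argument is supplied. So there is no ``paper's own proof'' to compare against; your sketch is essentially the classical proof, and the reduction via shortened codes, the MacWilliams/Pless power moments, and the Vandermonde uniqueness step for $(C^\perp)_S$ are all correct and standard.

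One point to tighten is your treatment of part (a). The ``parallel analysis'' you propose would study $C_S$ through its dual $\pi_S(C^\perp)$, but you have no hypothesis bounding $d^\perp$ from below, so the low-order coefficients of $\pi_S(C^\perp)$ are not a priori $S$-independent and the moment argument does not go through symmetrically. The usual fix is to bootstrap from what you have already established for (b): once the full weight distribution of $(C^\perp)_S$ is known to depend only on $|S|$, MacWilliams forces the same for $\pi_S(C)$; then an induction on the weight $i$ (starting at $i=d$, where $A_{d-t}(\pi_T(C))=\lambda_d(T)$ on the nose) extracts the $T$-independence of $\lambda_i(T)$ from that of the punctured distribution, using that the lower-weight classes are already known to be $t$-designs. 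With that adjustment your outline is complete.
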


\section{The automorphism group of an extremal [120, 60, 24] code}\label{section-autom}

For the whole section, let $C$ be an extremal $[120,60,24]$ code.
By Theorem \ref{Gleason-theorem} (b), we can easily deduce (see
\cite{Mallows}) that
\begin{equation} \label{enumerator} W_{C}(1,y)=1+39703755y^{24}+6101289120y^{28}+475644139425y^{32}+\ldots\end{equation}
Knowledge of the existence of a non-trivial automorphism group $G$
is very useful in constructing the code, since in this case the code
has the structure of a $\F_2G$-module. For this reason, there is an
intensive research on the automorphism group of extremal codes.

\begin{remark}\label{stateofart}
Concerning the code of length $120$, the following results on the
automorphism group $G$ of $C$ are known (see \cite{Bor-W-2p,
Bouy-24m, B-DC-W, C, CKW}):
\begin{itemize}
\item[\rm (a)] The order of $G$ divides $2^a\cdot3\cdot5\cdot7\cdot19\cdot23$ for a non-negative integer $a$ (which is at most $116$, since $G\subseteq S_{120}$).
\item[\rm (b)] If $\sigma$ is an automorphism of $C$ of prime order $p$ then its cycle structure is
\begin{center} {
\begin{tabular}{c|c|c}
$p$ & \mbox{\rm number of } & \mbox{ \rm number of } \\
 & \mbox{\rm  $p$-cycles} & \mbox{ \rm fixed points} \\
\hline
$2$ & $48,60$ & $24,0$ \\
$3$ & $40$ & $0 $\\
$5$ & $24$ & $0$ \\
$7$ & $17$ & $1$ \\
$19$ & $6$ & $6$ \\
$23$ & $5$ & $5$ \\
\end{tabular} }
\end{center}
\item[\rm (c)] If $\sigma$ is an automorphism of $C$ of odd composite order $r$, then the cycle structure of $\sigma$ is either $15$-$(8;0)$, $3\cdot 19$-$(2,0,2;0)$ or $5\cdot 23$-$(1,0,1;0)$.
\end{itemize}

Moreover, \textit{if all involutions act fixed point freely}, the
following conditions hold:

\begin{itemize}

\item[\rm (d)] If $\sigma$ is an automorphism of $C$ of even composite order not divisible by $8$, then the cycle structure of $\sigma$ is either $4$-$(30;0)$, $6$-$(20;0)$, $10$-$(12;0)$, $12$-$(10;0)$, $20$-$(6;0)$, $30$-$(4;0)$, $60$-$(2;0)$.

\item[\rm (e)] The order of $G$ is in $\{7, 19, 23, 38, 56, 57, 114, 115, 552, 2760\}$ or $G$ is a $\{2,3,5\}$-group of order dividing $120$.
\end{itemize}
\end{remark}

\begin{remark}
Condition (d) is not stated explicitly in any of the above
references, but it is an easy consequence of the results in
\cite{Bor-W-2p}. Furthermore, we give only the structure of the
automorphisms of even order not divisible by $8$ because we prove in
the following that an automorphism of order $8$ cannot exist under
the hypothesis that involutions act fixed point freely.
\end{remark}

\begin{remark}
Condition (e) corrects a mistake in Proposition 15 b) of
\cite{B-DC-W}, where ``$|G|=2^3\cdot 5^c\cdot 23$'' should have been
``$|G|=2^3\cdot 3\cdot  5^c\cdot 23$''. Moreover, it gives a
preciser statement about $\{2,3,5\}$-groups, based on Lemma
\ref{lemmaorder}.
\end{remark}

\subsection{Fixed code of automorphism of prime order}

In this subsection we present some preliminary results about the
automorphisms of prime order. It is a hard problem to prove that the
primes $3,5, 7, 19$ and $23$ cannot occur as orders of an
automorphism $\sigma$ of $C$: even if we can completely determine
the fixed code $F_\sigma(C)$, there are too many possibilities to
check for the complement $E_\sigma(C)$ defined in Section
\ref{background}. Also the case of the prime $2$ is computationally
hard and we do not even know the fixed code.

\vspace{3mm}

\textbf{Automorphism of order 2:} Let $ \sigma \in \Aut(C)$ be of
order $2$. Then $\sigma$ is either of type $2$-$(48;24)$ or of type
$2$-$(60;0)$. In the second case, by Theorem 1.2 of
\cite{Bor-Nebe-Involution},  $\pi_\sigma(F_{\sigma}(C))$ is a
self-dual $[60,30,12]$ code. Although some self-dual codes with
these parameters are known, a complete classification is still
unknown.

\vspace{3mm}

\textbf{Automorphism of order 3:} Let $ \sigma \in \Aut(C)$ be of
order $3$. Then $\sigma$ is of type $3$-$(40;0)$ and $\pi_\sigma
(F_{\sigma}(C))$ is a self-dual
 doubly-even $[40,20,8]$ code. By \cite{sd40}, there are $16470$ such codes up to equivalence.

\vspace{3mm}

\textbf{Automorphism of order 5:}  Let $ \sigma \in \Aut(C)$ be of
order $5$. Then $ \sigma$ is of type $5$-$(24;0)$ and $\pi_\sigma
(F_{\sigma}(C))$ is a self-dual $[24,12,8]$ code. This implies that
$\pi_\sigma(F_{\sigma}(C))$ is equivalent to the binary extended
Golay code $G_{24}$.

\vspace{3mm}

\textbf{Automorphism of order 7:}
 Let $ \sigma \in \Aut(C)$ be of order $7$. Then $\sigma$ is of type $7$-$(17;1)$ and $\pi_\sigma (F_{\sigma}(C))$ is a self-dual $[18,9,4]$ code.
 By \cite{clasifi}, $\pi_\sigma
 (F_{\sigma}(C))$ is a equivalent to $H_{18}$ or $I_{18}$.

A vector of weight $4$ in $\pi_\sigma(F_{\sigma}(C))$ has to be a
vector of weight $28$ in $F_{\sigma}(C)$, i.e. all nonzero
coordinates of vectors of weight $4$ correspond to cycles. By the
study of clusters (see \cite{Huffman}) we can easily prove that
$H_{18}$ cannot occur. Moreover, with the same technique, we can
prove that, up to equivalence,
$${\rm gen}(F_{\sigma}(C))=\left(
                \begin{array}{ccccccccccccccccc|c}
                  \textbf{1} & \textbf{1} & \textbf{1} & \textbf{1} & \textbf{0} &\textbf{0} & \textbf{0} & \textbf{0} & \textbf{0} &\textbf{0} & \textbf{0} & \textbf{0} & \textbf{0} &\textbf{0} & \textbf{0} & \textbf{0} & \textbf{0} & 0 \\
                  \textbf{0} &\textbf{0} & \textbf{1} & \textbf{1}& \textbf{1} & \textbf{1} & \textbf{0} &\textbf{0} & \textbf{0} & \textbf{0} & \textbf{0} &\textbf{0} & \textbf{0} & \textbf{0} & \textbf{0} & \textbf{0} & \textbf{0} & 0 \\
                  \textbf{0} &\textbf{0} & \textbf{0} & \textbf{0} & \textbf{1}& \textbf{1} & \textbf{1}& \textbf{1} & \textbf{0} &\textbf{0} & \textbf{0} & \textbf{0} & \textbf{0} &\textbf{0} & \textbf{0} & \textbf{0} & \textbf{0} & 0 \\
                  \textbf{0} &\textbf{0} & \textbf{0} & \textbf{0} & \textbf{0} & \textbf{0} & \textbf{1} &  \textbf{1}& \textbf{1} & \textbf{1} & \textbf{0} &\textbf{0} & \textbf{0} & \textbf{0} & \textbf{0} &\textbf{0} & \textbf{0} & 0\\
                  \textbf{1} &\textbf{0} & \textbf{1} & \textbf{0} & \textbf{1} & \textbf{0} & \textbf{1} & \textbf{0} & \textbf{1} & \textbf{0} &\textbf{0} & \textbf{0} & \textbf{0} & \textbf{0} &\textbf{0} & \textbf{0} & \textbf{0} & 1 \\
                  \textbf{0} &\textbf{0} & \textbf{0} & \textbf{0} & \textbf{0} &\textbf{0} & \textbf{0} & \textbf{0} & \textbf{0} & \textbf{0} & \textbf{1}& \textbf{1} & \textbf{1}& \textbf{1} & \textbf{0} &\textbf{0} & \textbf{0} & 0 \\
                  \textbf{0} &\textbf{0} & \textbf{0} & \textbf{0} & \textbf{0} &\textbf{0} & \textbf{0} & \textbf{0} & \textbf{0} &\textbf{0} & \textbf{0} & \textbf{0} & \textbf{1} & \textbf{1}& \textbf{1} & \textbf{1} & \textbf{0} & 0 \\
                  \textbf{0} &\textbf{0} & \textbf{0} & \textbf{0} &  \textbf{0} &\textbf{0} & \textbf{0} & \textbf{0} & \textbf{0} & \textbf{0} & \textbf{1} & \textbf{0} & \textbf{1} & \textbf{0} & \textbf{1} & \textbf{0} & \textbf{1} & 0 \\
                  \textbf{1} & \textbf{1}& \textbf{1} & \textbf{1} & \textbf{1} & \textbf{1}& \textbf{1} & \textbf{1} & \textbf{1} & \textbf{1}& \textbf{1} & \textbf{1} & \textbf{1} & \textbf{1}& \textbf{1} & \textbf{1} & \textbf{1} & 1 \\
                \end{array}
              \right)$$
where \textbf{1} is the all-one vector and \textbf{0} the
zero-vector of length $7$.

\vspace{3mm}

\textbf{Automorphism of order 19:} Let $ \sigma \in \Aut(C)$ be of
order 19. Then $\sigma$ is of type $19$-$(6;6)$ and $\pi_\sigma
(F_{\sigma}(C))$ is a self-dual $[12,6,4]$ code. By \cite{clasifi},
$\pi_\sigma(F_{\sigma}(C))$ is equivalent to $B_{12}$.

By Lemma \ref{coefficients} and by \eqref{enumerator},
$A_{24}^F\equiv 6 \mod 19$. Therefore there are $6 \mod 19$ vectors
of $F_{\sigma}(C)$ of weight $24$. If $v\in F_{\sigma}(C)$ has
weight $24$, then $\wt(\pi_\sigma(v))=6$. Suppose that $v_{1},
v_{2}\in F_{\sigma}(C)$ of weight $24$ coincide in the coordinate
corresponding to a cycle of length $19$. Then $\wt(v_{1}+v_{2})\leq
2$. Therefore $v_{1}=v_{2}$ and there are exactly 6 vectors in
$F_{\sigma}(C)$ of weight $\wt(v)=24$. These vectors are linearly
independent and so, up to a permutation of the last six columns,
$$\label{genf}{\rm{gen}}(F_{\sigma}(C))=\left(%
\begin{array}{cccccc|cccccc}
  \mathbf{1} & \mathbf{0} & \mathbf{0} & \mathbf{0} & \mathbf{0} & \mathbf{0} & 0 & 1 & 1 & 1 & 1 &1 \\
  \mathbf{0}&  \mathbf{1} & \mathbf{0} & \mathbf{0} & \mathbf{0} & \mathbf{0} & 1 & 0 & 1 & 1 & 1 &1 \\
  \mathbf{0} & \mathbf{0} & \mathbf{1} & \mathbf{0} & \mathbf{0} & \mathbf{0} & 1 & 1 & 0 & 1 & 1 &1 \\
  \mathbf{0 }& \mathbf{0} & \mathbf{0} & \mathbf{1} & \mathbf{0} & \mathbf{0} & 1 & 1 & 1 & 0 & 1 &1 \\
  \mathbf{0} & \mathbf{0} & \mathbf{0} & \mathbf{0} & \mathbf{1} & \mathbf{0} & 1 & 1 & 1 & 1 & 0 &1 \\
  \mathbf{0} & \mathbf{0} & \mathbf{0} & \mathbf{0} & \mathbf{0} & \mathbf{1} & 1 & 1 & 1 & 1 & 1 &0 \\
\end{array}%
\right),$$\\
where \textbf{1} is the all-one vector and \textbf{0} the
zero-vector of length $19$.

\vspace{3mm}

\textbf{Automorphism of order 23:} Let $ \sigma \in \Aut(C)$ be of
order 23. Then $\sigma$ is of type $23$-$(5;5)$ and $\pi_\sigma
(F_{\sigma}(C))$ is a self-dual $[10,5,2]$ code. So (see
\cite{Wasserman}), up to equivance,
$$
{\rm gen}(F_{\sigma}(C))=\left(%
\begin{array}{ccccc|ccccc}
  \mathbf{1} & \mathbf{0} & \mathbf{0} & \mathbf{0} & \mathbf{0} & 1 & 0& 0 & 0& 0\\
  \mathbf{0} & \mathbf{1} & \mathbf{0} & \mathbf{0} & \mathbf{0} & 0 & 1& 0& 0& 0\\
  \mathbf{0} & \mathbf{0} & \mathbf{1} & \mathbf{0} & \mathbf{0} & 0 & 0& 1& 0& 0\\
  \mathbf{0} & \mathbf{0} & \mathbf{0} & \mathbf{1} & \mathbf{0} & 0 & 0& 0& 1& 0\\
  \mathbf{0} & \mathbf{0} & \mathbf{0} & \mathbf{0} & \mathbf{1} & 0 & 0& 0& 0& 1\\
\end{array}%
\right) .$$

\subsection{Automorphisms of composite order}

In this subsection we present some new results about automorphisms
of composite order. The result for the automorphism of order $8$ is
a corollary of Theorem 1.2. in \cite{Bor-Nebe-Involution} while the
main idea for the other orders is the following: if
$\sigma\in\Aut(C)$ is an automorphism of order $p\cdot q$, then, in
some cases, we can classify the possible sums
$F_{\sigma^q}(C)+F_{\sigma^p}(C)$. If no sum has minimum distance
greater than or equal to $24$, then an automorphism of this order
cannot occur. Note that these methods are a simplified version of
those in Section V of \cite{Borello6}.

\vspace{3mm}

\textbf{Automorphism of order 8:} Let $\sigma \in \Aut(C)$ be a
fixed point free automorphism of order $8$. Then $\sigma$ is of type
$8$-$(15;0)$. By Theorem 1.2. in \cite{Bor-Nebe-Involution}, $C$ is
a free $\F_2\langle\sigma^4\rangle$-module, so, by Chouinard's
Theorem \cite{chou}, $C$ is a free
$\F_2\langle\sigma\rangle$-module. This is impossible, since $8$
does not divide $60$.

\begin{theorem}
The automorphism group of a self-dual $[120,60,24]$ code does not
contain fixed point free elements of order $8$.
\end{theorem}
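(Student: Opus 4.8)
The plan is to reduce the statement to a numerical obstruction over the group algebra $\F_2\langle\sigma\rangle$, using the fourth power of $\sigma$ as a bridge between order $2$ and order $8$. So I would suppose, for contradiction, that $\sigma\in\Aut(C)$ is a fixed point free element of order $8$, keeping in mind that we operate under the standing hypothesis that involutions act fixed point freely.

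The first step is to pin down the cycle structure of $\sigma$. Since $\sigma$ has order $8$, the element $\tau:=\sigma^4$ is an involution, hence fixed point free by hypothesis. A coordinate is fixed by $\tau=\sigma^4$ precisely when its $\sigma$-orbit has length dividing $4$, so the absence of $\tau$-fixed points rules out $\sigma$-orbits of length $1$, $2$ or $4$ and leaves only orbits of length $8$. As $120=8\cdot 15$, this shows $\sigma$ is of type $8$-$(15;0)$; in particular $\sigma$ itself is fixed point free, consistently with the hypothesis.

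Next I would propagate freeness from the order-$2$ subgroup to the whole cyclic group. The decisive input is Theorem 1.2 of \cite{Bor-Nebe-Involution}: a self-dual code admitting a fixed point free involution $\tau$ is a free $\F_2\langle\tau\rangle$-module. Applied with $\tau=\sigma^4$, this makes $C$ free over $\F_2\langle\sigma^4\rangle$. Now $G:=\langle\sigma\rangle\cong\Z/8\Z$ is a cyclic $2$-group, whose only elementary abelian $2$-subgroups are the trivial group and $\langle\sigma^4\rangle\cong\Z/2\Z$. By Chouinard's theorem \cite{chou}, a finitely generated $\F_2 G$-module is projective exactly when its restriction to each elementary abelian subgroup is projective; since projective and free coincide for finitely generated modules over a finite $p$-group, the freeness of $C$ over $\F_2\langle\sigma^4\rangle$ forces $C$ to be a free $\F_2\langle\sigma\rangle$-module.

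The contradiction is then purely arithmetic: a free $\F_2[\Z/8\Z]$-module has $\F_2$-dimension divisible by $8$, whereas $\dim_{\F_2}C=60$ and $8\nmid 60$. Hence no fixed point free element of order $8$ can occur. I expect the one genuinely substantial ingredient to be the appeal to Theorem 1.2 of \cite{Bor-Nebe-Involution}, which converts the combinatorial condition ``$\sigma^4$ acts without fixed points'' into the algebraic statement that $C$ is free over $\F_2\langle\sigma^4\rangle$; once that is granted, the passage from order $2$ to order $8$ is a formal consequence of Chouinard's theorem for the cyclic $2$-group $\Z/8\Z$, and the final divisibility check is immediate.
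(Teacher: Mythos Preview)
Your proof is correct and follows exactly the paper's argument: deduce the cycle type $8$-$(15;0)$ from the fixed-point-free action of $\sigma^4$, invoke Theorem~1.2 of \cite{Bor-Nebe-Involution} to get freeness over $\F_2\langle\sigma^4\rangle$, lift to freeness over $\F_2\langle\sigma\rangle$ via Chouinard's theorem, and finish with $8\nmid 60$. You have in fact been more careful than the paper in spelling out why $\sigma^4$ is fixed point free (via the standing hypothesis on involutions) and how Chouinard applies to the cyclic $2$-group, but the substance is identical.
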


\vspace{3mm}

\textbf{Automorphism of order 30:} Let $\sigma \in \Aut(C)$ be of
order $30$. Then $\sigma$ is of type $30$-$(4;0)$. We can suppose,up
to equivalence, that
$$\sigma=(1,\ldots,30)(31,\ldots,60)(61,\ldots,90)(91,\ldots,120).$$
Let $\sigma_3:=\sigma^{10}$ and $\sigma_5:=\sigma^6$. Then
$\sigma_3$ is of type $3$-$(40;0)$ and $\sigma_5$ is of type
$5$-$(24;0)$. Since $\sigma$ is in the centralizer of both
$\sigma_3$ and $\sigma_5$ in $S_{120}$, it acts on
$\pi_{\sigma_3}(F_{\sigma_3}(C))$ as an automorphism, say
$\pi_{\sigma_3}(\sigma)$, of type $10$-$(4;0)$ and on
$\pi_{\sigma_5}(F_{\sigma_5}(C))$ as an automorphism, say
$\pi_{\sigma_5}(\sigma)$, of type $6$-$(4;0)$. Among the 16470
self-dual $[40,20,8]$ codes, only 28, say $D_1,\ldots,D_{28}$, have
an automorphism of this type, for a total of $69$ conjugacy classes.
So, up to a permutation in $C_{S_{40}}(\pi_{\sigma_3}(\sigma))$,
$\pi_{\sigma_3}(C)$ belong to a set, say $\mathcal{D}$, of $69$
elements. On the other hand, the extended binary Golay code has only
one conjugacy class of elements of type $6$-$(4;0)$. If $E_0$ is an
extended binary Golay code with automorphism
$\pi_{\sigma_5}(\sigma)$, then the orbit, say $\mathcal{E}$, of
$E_0$ under the action of $C_{S_{24}}(\pi_{\sigma_5}(\sigma))$ has
1296 elements. The code $\pi_{\sigma_5}(F_{\sigma_5}(C))$ belongs to
$\mathcal{E}$. With {\sc Magma} \cite{magma} we check that all the
codes in
$\mathcal{C}:=\{\pi_{\sigma_3}^{-1}(D)+\pi_{\sigma_5}^{-1}(E) \ | \
D\in \mathcal{D}, E\in\mathcal{E}\}$ have minimum distance less then
$24$. Since $F_{\sigma_3}(C)+F_{\sigma_5}(C)\subseteq C$ would have
to belong to $\mathcal{C}$, this implies the following result.

\begin{theorem}
The automorphism group of a self-dual $[120,60,24]$ code does not
contain elements of order $30$.
\end{theorem}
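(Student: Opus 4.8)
The plan is to reduce the question to a finite, machine-checkable search by combining the decompositions of $C$ with respect to the two odd prime powers dividing $30$. First I would pin down the cycle structure. If $\sigma$ has order $30$, then $\sigma^{10}$, $\sigma^{6}$ and $\sigma^{15}$ have orders $3$, $5$ and $2$, so by Remark~\ref{stateofart}(b) the first is of type $3$-$(40;0)$ and the second of type $5$-$(24;0)$; hence $\sigma$ has no fixed point and, since its cycle lengths must avoid producing fixed points under both $\sigma^{10}$ and $\sigma^{6}$, each cycle has length $15$ or $30$. Writing $a$ and $b$ for the numbers of such cycles gives $15a+30b=120$, while the $15$-cycles contribute exactly $15a$ fixed points to the involution $\sigma^{15}$. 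As an involution has $0$ or $24$ fixed points and $15a\in\{0,24\}$ forces $a=0$, the type is $30$-$(4;0)$, and up to equivalence I may take $\sigma=(1,\dots,30)(31,\dots,60)(61,\dots,90)(91,\dots,120)$.

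Next I would exploit the fixed codes. Set $\sigma_3:=\sigma^{10}$ and $\sigma_5:=\sigma^{6}$; as recalled above, $\pi_{\sigma_3}(F_{\sigma_3}(C))$ is one of the $16470$ doubly-even self-dual $[40,20,8]$ codes and $\pi_{\sigma_5}(F_{\sigma_5}(C))$ is the extended Golay code $G_{24}$. Since $\sigma$ centralises both $\sigma_3$ and $\sigma_5$, it descends to automorphisms $\pi_{\sigma_3}(\sigma)$ and $\pi_{\sigma_5}(\sigma)$ of these projections; computing how the four $30$-cycles of $\sigma$ permute the $40$ three-cycles of $\sigma_3$ (resp.\ the $24$ five-cycles of $\sigma_5$) shows the induced maps are of type $10$-$(4;0)$ and $6$-$(4;0)$. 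I would then filter: scanning the $16470$ codes for those admitting a $10$-$(4;0)$ automorphism leaves only finitely many, giving a finite list $\mathcal{D}$ of projected fixed codes up to conjugacy in $C_{S_{40}}(\pi_{\sigma_3}(\sigma))$, while the uniqueness of $G_{24}$ together with the number of conjugacy classes of its $6$-$(4;0)$ automorphisms yields a finite orbit $\mathcal{E}$ of candidates for $\pi_{\sigma_5}(F_{\sigma_5}(C))$ under $C_{S_{24}}(\pi_{\sigma_5}(\sigma))$.

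The contradiction then comes from the subcode $F_{\sigma_3}(C)+F_{\sigma_5}(C)\subseteq C$: being contained in $C$, each of its nonzero words has weight at least $24$. With $\sigma$ fixed in standard form this subcode equals $\pi_{\sigma_3}^{-1}(D_0)+\pi_{\sigma_5}^{-1}(E_0)$, where $D_0,E_0$ are the genuine projected fixed codes and $\pi^{-1}$ denotes the constant-on-cycles lift. I would verify with {\sc Magma} that every code in $\mathcal{C}:=\{\pi_{\sigma_3}^{-1}(D)+\pi_{\sigma_5}^{-1}(E)\mid D\in\mathcal{D},\,E\in\mathcal{E}\}$ has minimum distance strictly less than $24$; since $(D_0,E_0)$ can, after a permutation centralising $\sigma$, be taken inside $\mathcal{D}\times\mathcal{E}$, the true subcode is equivalent to a member of $\mathcal{C}$ and hence has minimum distance $<24$ as well, a contradiction that rules out order $30$.

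The main obstacle, and the real content of the argument, is the reduction to this finite search together with the guarantee that the genuine pair $(D_0,E_0)$ is covered. One must (i) produce the classification data — which $[40,20,8]$ codes carry a $10$-$(4;0)$ automorphism and how many conjugacy classes each supplies — and (ii) control the two centralizer actions so that normalising the two projections independently is legitimate. The latter is safe precisely because $\mathcal{C}$ ranges over the \emph{full} product $\mathcal{D}\times\mathcal{E}$ and, as one checks, the projection $C_{S_{120}}(\sigma)\to C_{S_{40}}(\pi_{\sigma_3}(\sigma))$ is surjective: any normalisation of $D_0$ lifts to a global permutation fixing $\sigma$, which then carries $E_0$ to another element of the single $C_{S_{24}}(\pi_{\sigma_5}(\sigma))$-orbit $\mathcal{E}$. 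Once this bookkeeping is secured, the remaining minimum-distance computations over the roughly $69\cdot 1296$ candidate codes of length $120$ are routine but heavy.
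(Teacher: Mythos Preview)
Your proposal is correct and follows essentially the same approach as the paper: fix $\sigma$ in standard form, pass to the projected fixed codes $\pi_{\sigma_3}(F_{\sigma_3}(C))$ and $\pi_{\sigma_5}(F_{\sigma_5}(C))$, reduce the candidates via the induced $10$-$(4;0)$ and $6$-$(4;0)$ automorphisms to a finite set $\mathcal{D}\times\mathcal{E}$ (the paper reports exactly the $28$ codes, $69$ conjugacy classes, and $1296$-element orbit you anticipate), and then rule all of them out by a {\sc Magma} minimum-distance computation on the sums. You even supply two things the paper leaves implicit: an unconditional derivation of the cycle type $30$-$(4;0)$ via the fixed-point counts of $\sigma^{10}$, $\sigma^{6}$, $\sigma^{15}$, and a careful justification of why normalising the two projections independently is legitimate.
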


This theorem implies that also automorphism of order $60$ cannot
occur in $\Aut(C)$.

\vspace{3mm}

\textbf{Automorphism of order 57:} Let $\sigma \in \Aut(C)$ be of
order 57. Then $\sigma$ is of type $3\cdot 19$-$(2,0,2;0)$. We can
suppose,up to equivalence, that
$$\sigma=(1,\ldots,57)(58,\ldots,114)(115,116,117)(118,119,120).$$
Let $\sigma_3:=\sigma^{19}$ and $\sigma_{19}:=\sigma^3$. Then
$\sigma_3$ is of type $3$-$(40;0)$ and $\sigma_{19}$ is of type
$19$-$(6;6)$. Since $\sigma$ is in the centralizer of both
$\sigma_3$ and $\sigma_{19}$ in $S_{120}$, it acts on
$\pi_{\sigma_3}(F_{\sigma_3}(C))$ as an automorphism, say
$\pi_{\sigma_3}(\sigma)$, of type $19$-$(2;2)$ and on
$\pi_{\sigma_{19}}(F_{\sigma_{19}}(C))$ as an automorphism, say
$\pi_{\sigma_{19}}(\sigma)$, of type $3$-$(4;0)$. Among the 16470
self-dual $[40,20,8]$ codes, only 3, say $D_1, D_2$ and $D_{3}$,
have an automorphism of this type, for a total of $396$ conjugacy
classes. So, up to a permutation in
$C_{S_{40}}(\pi_{\sigma_3}(\sigma))$, $\pi_{\sigma_3}(C)$ belong to
a set, say $\mathcal{D}$, of $396$ elements. On the other hand, the
code $B_{12}$ has only one conjugacy class of elements of type
$3$-$(4;0)$. If $E_0$ is a $B_{12}$ code with automorphism
$\pi_{\sigma_{19}}(\sigma)$, then the orbit, say $\mathcal{E}$, of
$E_0$ under the action of $C_{S_{12}}(\pi_{\sigma_{19}}(\sigma))$
has  27 elements. The code $\pi_{\sigma_{19}}(F_{\sigma_{19}}(C))$
belongs to $\mathcal{E}$. With {\sc Magma} \cite{magma} we check
that all the codes in
$\mathcal{C}:=\{\pi_{\sigma_3}^{-1}(D)+\pi_{\sigma_{19}}^{-1}(E) \ |
\ D\in \mathcal{D}, E\in\mathcal{E}\}$ have minimum distance less
then $24$. Since $F_{\sigma_3}(C)+F_{\sigma_{19}}(C)\subseteq C$
would have to belong to $\mathcal{C}$, this implies the following
result.

\begin{theorem}
The automorphism group of a self-dual $[120,60,24]$ code does not
contain elements of order $57$.
\end{theorem}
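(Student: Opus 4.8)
The plan is to run the same machinery used for order $30$, splitting $\sigma$ into its $3$-part and $19$-part and then pinning down the fixed code of each. Since $\sigma$ is of type $3\cdot 19$-$(2,0,2;0)$ by Remark \ref{stateofart}, up to equivalence I may fix the representative
$$\sigma=(1,\ldots,57)(58,\ldots,114)(115,116,117)(118,119,120),$$
and set $\sigma_3:=\sigma^{19}$ and $\sigma_{19}:=\sigma^{3}$. Computing cycle lengths (each $57$-cycle splits into $19$ cycles of length $3$ under the $19$th power, and into $3$ cycles of length $19$ under the cube), I would check that $\sigma_3$ is of type $3$-$(40;0)$ and $\sigma_{19}$ is of type $19$-$(6;6)$, in agreement with the structures recorded in Remark \ref{stateofart}.

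Next I would invoke the prime-order analysis of the previous subsection together with the Maschke-type decomposition $C=F_{\sigma_3}(C)\oplus E_{\sigma_3}(C)$: the projected fixed code $\pi_{\sigma_3}(F_{\sigma_3}(C))$ is a self-dual doubly-even $[40,20,8]$ code, while $\pi_{\sigma_{19}}(F_{\sigma_{19}}(C))$ is a self-dual $[12,6,4]$ code, hence equivalent to $B_{12}$. Because $\sigma$ lies in the centralizer of $\sigma_3$ (respectively $\sigma_{19}$) in $S_{120}$, it descends to an automorphism $\pi_{\sigma_3}(\sigma)$ of $\pi_{\sigma_3}(F_{\sigma_3}(C))$ of type $19$-$(2;2)$ and to an automorphism $\pi_{\sigma_{19}}(\sigma)$ of $\pi_{\sigma_{19}}(F_{\sigma_{19}}(C))$ of type $3$-$(4;0)$. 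The structural pivot of the whole argument is that $F_{\sigma_3}(C)+F_{\sigma_{19}}(C)$ is a subcode of $C$, so if $C$ exists this sum must have minimum distance at least $24$.

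The core of the proof is then a finite, computer-assisted search. On one side I would sweep the $16470$ self-dual doubly-even $[40,20,8]$ codes and retain only those admitting an automorphism of type $19$-$(2;2)$; I expect only three of them, contributing $396$ conjugacy classes of such automorphisms, so that reducing modulo $C_{S_{40}}(\pi_{\sigma_3}(\sigma))$ yields an explicit set $\mathcal{D}$ of $396$ candidates for $\pi_{\sigma_3}(C)$. On the other side, $B_{12}$ has a single conjugacy class of automorphisms of type $3$-$(4;0)$, so the orbit $\mathcal{E}$ of one realization under $C_{S_{12}}(\pi_{\sigma_{19}}(\sigma))$ is a small explicit set (of $27$ elements) containing $\pi_{\sigma_{19}}(F_{\sigma_{19}}(C))$. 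I would then lift each pair $(D,E)\in\mathcal{D}\times\mathcal{E}$ to length $120$ and compute the minimum distance of every code in $\mathcal{C}:=\{\pi_{\sigma_3}^{-1}(D)+\pi_{\sigma_{19}}^{-1}(E)\mid D\in\mathcal{D},\ E\in\mathcal{E}\}$. If all of them have minimum distance strictly less than $24$, then no admissible $C$ can contain its own subcode $F_{\sigma_3}(C)+F_{\sigma_{19}}(C)$, and the theorem follows.

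The main obstacle I anticipate is essentially bookkeeping rather than conceptual. The delicate point is to enumerate the automorphisms of type $19$-$(2;2)$ among the $16470$ codes and to reduce the resulting realizations modulo the correct centralizer so that $\mathcal{D}$ \emph{provably} exhausts every possibility for $\pi_{\sigma_3}(C)$ up to equivalence; the same care is needed for the single orbit $\mathcal{E}$. Getting these two centralizer reductions right is exactly what makes the search exhaustive rather than merely indicative, and it is also what guarantees that the pair $(\pi_{\sigma_3}(C),\pi_{\sigma_{19}}(C))$ arising from a hypothetical $C$ really is captured by $\mathcal{D}\times\mathcal{E}$. Once this is secured, the minimum-distance computations over the $|\mathcal{D}|\cdot|\mathcal{E}|$ sums are routine, and the contradiction with extremality closes the argument.
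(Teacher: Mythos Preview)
Your proposal is correct and follows essentially the same approach as the paper: the same split into $\sigma_3=\sigma^{19}$ and $\sigma_{19}=\sigma^{3}$, the same induced automorphism types ($19$-$(2;2)$ on the projected $[40,20,8]$ code and $3$-$(4;0)$ on $B_{12}$), the same reduction to $396$ candidates in $\mathcal{D}$ and $27$ in $\mathcal{E}$, and the same computer check that every sum $\pi_{\sigma_3}^{-1}(D)+\pi_{\sigma_{19}}^{-1}(E)$ has minimum distance below $24$. There is nothing to add.
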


\vspace{3mm}

\textbf{Other orders}: in the case of automorphisms of order $12$
(fixed point free), $15$, $20$ (fixed point free) and $115$ we do
not get a contradiction on the minimum distance, while in the case
of the automorphism of order $2\cdot p$, with $p$ prime, we cannot
use the method above, since we do not have a classification of the
fixed code by the automorphism of order $2$.

\subsection{Structure of the automorphism group in the fixed point free case}\label{ssfpf}

In this subsection we present a theorem on the structure of the
automorphism group of a self-dual $[120,60,24]$ code as in Section 6
of \cite{Bor-summary} for the self-dual $[72,36,16]$. Note that in
\cite{Bouy-24m} it is proved that involutions acting on extremal
codes of length $24m$ with $m>1$ are always fixed point free, except
for $m=5$, i.e. our case. It seems to be very difficult, although
very interesting, to exclude this exceptional case. Allowing fixed
points increases enormously the number of possible automorphism
groups and we cannot get nice results. Therefore we decided, as in
\cite{B-DC-W}, to restrict our attention to the fixed point free
case in order to get, at least under this hypothesis, a stronger
result.

\begin{theorem}
If all the involutions act fixed point freely, the automorphism
group $G$ of a self-dual $[120,60,24]$ code is trivial or isomorphic
to one of the following $64$ groups:

\vspace{-3mm}

$$\begin{tabular}{|m{0.9cm}|m{4.5cm}|| m{0.9cm}| m{6.5cm}|}
\hline
{\rm Order} & {\rm Groups} & {\rm Order} & {\rm Groups}\\
\hline \hline
$2$ & $C_2$ &  $23$ & $C_{23}$\\
\hline
$3$ & $C_3$ &  $24$ &  $Dic_{24}$, $S_3\times C_4$, $D_{12}$, $Dic_{12}\times C_2$, $C_3\rtimes D_4$, $C_6\times C_4$, $D_4\times C_3$, $S_4$,  $A_4\times C_2$, $D_6\times C_2$, $C_2\times C_2\times C_2\times C_3$\\
\hline
$4$ & $C_4$,  $C_2\times C_2$ & $30$ & $D_{15}$, $C_5\times S_3$ $C_3\times D_5$\\
\hline
$5$ & $C_5$ &  $38$ & $D_{19}$\\
\hline
$6$ & $C_6$, $S_3$ &  $40$ & $C_{20}\times C_2$, $D_5\times C_4$, $C_5\rtimes (C_4\times C_2)$, $D_{20}$, $C_5\rtimes D_4$, $D_4\times C_5$, $GA(1,5)\times C_2$, $D_5\times C_2\times C_2$, $C_2\times C_2\times C_2\times C_5$\\
\hline
$7$ & $C_7$ &    $56$ & $(C_2\times C_2\times C_2)\rtimes C_7$\\
\hline
$8$ & $C_4\times C_2$, $C_2\times C_2\times C_2$, $D_4$ &    $57$ & $C_{19}\rtimes C_3$\\
\hline
$10$ & $C_{10}$, $D_5$  & $60$ & $A_5$, $D_5\rtimes C_6$, $C_{15}\rtimes C_4$, $D_{15}\rtimes C_2$, $A_4\times C_5$\\
\hline $12$ & $C_{12}$, $C_6\times C_2$, $D_6$, $A_4$, $Dic_{12}$
& $114$ & $C_{19}\rtimes C_6$ \\
\hline
$15$ & $C_{15}$ & $115$ & $C_{115}$\\
\hline
$19$ & $C_{19}$ &  $120$ & $S_5$, $A_5\times C_2$, $S_4\times C_5$, $A_4\rtimes D_5$, $A_4\times D_5$\\
\hline
$20$ & $C_{20}$, $C_{10}\times C_2$, $D_{10}$, $Dic_{20}$, $GA(1,5)$ & & \\
\hline
\end{tabular}$$
\end{theorem}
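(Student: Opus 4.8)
The plan is to start from the order restrictions already available and reduce the list of possible groups to the stated $64$ isomorphism types by a mixture of elementary group theory and a finite, computer-assisted enumeration.

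First I would invoke Remark~\ref{stateofart}(e): under the standing hypothesis that involutions act fixed point freely, $|G|$ lies in $\{7,19,23,38,56,57,114,115,552,2760\}$ or $G$ is a $\{2,3,5\}$-group whose order divides $120$. Two of these orders, $552=2^3\cdot3\cdot23$ and $2760=2^3\cdot3\cdot5\cdot23$, do not appear in the table and must be excluded; I would do this with Sylow theory applied to a Sylow $23$-subgroup $P=\langle\tau\rangle\cong C_{23}$. Sylow's theorem gives $n_{23}\in\{1,24\}$ in both cases, these being the only divisors of $|G|/23$ congruent to $1$ modulo $23$. If $n_{23}=1$ then $P\trianglelefteq G$ and the conjugation map $G\to\Aut(P)\cong C_{22}$ has image of order dividing $\gcd(|G|,22)=2$; hence $C_G(P)$ has index at most $2$, so even order, and by Cauchy it contains an involution commuting with $\tau$. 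If $n_{23}=24$ then $|N_G(P)|$ equals $23$ or $115$, and in either case $N_G(P)=C_G(P)$ (a group of order $115$ is cyclic, hence abelian), so Burnside's normal $p$-complement theorem yields a normal $23$-complement $K$ with $|K|\in\{24,120\}$ on which $P$ acts. Since $K$ has fewer than $23$ Sylow $2$-subgroups, $P$ fixes one of them, say $Q$, setwise; and because $23\nmid|\Aut(Q)|$ for every group $Q$ of order $8$, $P$ in fact centralizes $Q$ and hence an involution in it. In every case we obtain an element of order $2\cdot23=46$, which is not among the admissible even composite orders listed in Remark~\ref{stateofart}(d), a contradiction. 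This removes $552$ and $2760$.

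It then remains to treat the finitely many remaining orders. Combining the cycle-structure data of Remark~\ref{stateofart}(b)--(d) with the nonexistence of elements of order $8,30,57$ and $60$ (the three theorems above, noting that an order-$60$ element would square to one of order $30$), the admissible element orders are exactly $\{1,2,3,4,5,6,7,10,12,15,19,20,23,115\}$. A candidate group may occur only if all of its element orders lie in this set. I would stress that this does \emph{not} forbid groups \emph{of} order $30$, $57$ or $60$: it forbids only a cyclic subgroup of that order, which is why $D_{15}$, $C_{19}\rtimes C_3$, $A_5$ and their companions survive while $C_{30},C_{57},C_{60}$ do not; likewise at order $38$ it keeps $D_{19}$ but discards $C_{38}$. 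A further constraint comes from the fixed-point-freeness of involutions: by Theorem~1.2 of \cite{Bor-Nebe-Involution} the code $C$ is a free $\F_2\langle t\rangle$-module for each involution $t$, and Chouinard's theorem \cite{chou} then prunes the admissible Sylow $2$-subgroups (this is precisely the mechanism behind the order-$8$ result, which rules out a subgroup $C_8$ because freeness would force $8\mid60$). For each admissible order I would run through the finitely many isomorphism types supplied by the small-group classification, discard those possessing an inadmissible element order or an incompatible Sylow $2$-subgroup, and retain the survivors; the resulting list is the tabulated one.

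The conceptual content is short; the genuine labor is the exhaustive case analysis over the roughly twenty admissible orders, together with the many isomorphism types occurring at the composite orders $24,40,60$ and $120$. This bookkeeping is mechanical and I would delegate it to {\sc Magma} \cite{magma}. The two places where I expect the main obstacle to lie are the normal-complement argument for $552$ and $2760$ sketched above, and the careful handling of the groups of order $30,57,60$ that must be \emph{kept} precisely because they avoid the forbidden cyclic element while still meeting every Sylow and module-theoretic constraint.
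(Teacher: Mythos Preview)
Your overall strategy matches the paper's: restrict $|G|$ via Remark~\ref{stateofart}(e), restrict the element spectrum to $O=\{1,2,3,4,5,6,7,10,12,15,19,20,23,115\}$ using the cycle-structure data and the theorems on orders $8$, $30$, $57$, prune $2$-groups via Chouinard (this kills $C_8$ and $Q_8$), and then enumerate with {\sc Magma}. Your Burnside normal-$p$-complement argument for $|G|\in\{552,2760\}$ is a clean alternative to the paper's Sylow-counting and reaches the same contradiction (an element of order $46$).

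There is, however, a genuine gap. Your two filters---element orders in $O$ and Sylow $2$-subgroup not isomorphic to $C_8$ or $Q_8$---do \emph{not} cut the list down to the tabulated $64$ groups. At order $120$ the group the paper calls $C_5\rtimes(S_3\times C_4)$ (isomorphic to $GA(1,5)\times S_3$) has element spectrum $\{1,2,3,4,5,6,10,12,15\}\subset O$ and Sylow $2$-subgroup $C_4\times C_2$, so it passes both of your tests, yet it is not among the five surviving groups of order $120$. The paper eliminates it with an extra argument you do not supply: the unique Sylow $5$-subgroup $\langle\sigma\rangle$ is normal, so $G/\langle\sigma\rangle\cong S_3\times C_4$ must act (fixed point freely, by the standing hypothesis on involutions) on the $24$ cycles of $\sigma$, i.e.\ as a subgroup of $\Aut(\pi_\sigma(F_\sigma(C)))=\Aut(G_{24})=M_{24}$; one then checks that $M_{24}$ contains no such fixed-point-free copy of $S_3\times C_4$. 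Without this step your enumeration would return (at least) $65$ groups rather than $64$, so the claim that ``the resulting list is the tabulated one'' is not justified by the filters you describe.
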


\vspace{2mm}

\begin{proof}
All assertions about groups of order less than or equal to $552$
make use of the library SmallGroups of {\sc Magma} \cite{magma}.
Condition (e) of Remark \ref{stateofart} implies that the order of
$G$ is in $\{1, 2, 3, 4, 5, 6, 7, 8, 10, 12, 15, 19, 20, 23, 24,
30,$ $38,$ $40, 56, 57, 60, 114, 115, 120, 552, 2760\}$.\\ Moreover,
by Remark \ref{stateofart} and by the previous results, the order of
every element in $G$ is in $O:=\{1, 2, 3, 4, 5, 6, 7, 10, 12, 15,
19, 20, 23, 115\}$.

If $|G|=2760$, the there exists either one $23$-Sylow or $24$
$23$-Sylow subgroups. In the first case, the $23$-Sylow is normal
and its product with a 2-Sylow subgroup is a subgroup of order
$184$. All groups of order $184$ contain an element of order $46$.
In the second case, $G$ acts on the $24$ $23$-Sylow subgroups and
$G_H$ is of order $115$ for every $23$-Sylow subgroup $H$. Therefore
$G_H$ is cyclic. Let $K$ be the only subgroup of $G_H$ of order
$23$. This acts on $23$ groups (all except $H$) and so it has $23$
fixed points. Then $K$ is contained in $G_H$ for every $H$ and it
should be the unique group of order $23$ which is contained in $G_H$
for all $H$. This is not possible, since every $H$ is contained in
$G_H$, so we have a contradiction.

The quaternion group $Q_8$ cannot occur, again by Chouinard's
Theorem (see the proof above for the element of order $8$).

The group $C_5\rtimes (S_3\times C_4)$, of order $120$, is not
possible, since if $\sigma$ is the element of order $5$, then in the
automorphism group of  $\pi_\sigma(F_\sigma(C))$ (which is an
extended binary Golay code) there should be a subgroup isomorphic to
$S_3\times C_4$ acting fixed point freely, and this is not the case.

Finally, all the other groups are excluded by verifying that they
have elements of order which is not in $O$, or a subgroup isomorphic
to $Q_8$ or to $C_5\rtimes (S_3\times C_4)$.
\end{proof}

\begin{remark}
It would be interesting to exclude other non-abelian groups, using
methods similar to those in \cite{Borello-Volta-Nebe}, or elementary
abelian groups, using methods similar to those in \cite{Borello8}.
However, for a lack of classification of smaller codes, this seems
to be still computationally impossible. It would be also interesting
to get similar results without the hypothesis of the fixed point
free action, but this seems to make the number of possibilities grow
enormously. Finally, another direction of further research can be to
get a similar result for the extremal code of length $96$, which is
studied in \cite{B-W-Y, DC-W-96, Dontcheva}, but this is beyond the
aim of this paper.
\end{remark}

\section{Some necessary conditions for the existence of a self-dual extremal [120, 60, 24] code}\label{section-necess}

In this section we establish some necessary conditions for the
existence of an extremal $[120, 60, 24]$ code. Similar conditions
are given in \cite{Dougherty} and \cite{formally-dual} for an
extremal $[72, 36, 16]$ and $[96, 48, 20]$ code.

Let $C$ be a $[118,59,22]$ type I code. By Theorem
\ref{Gleason-theorem} (a) we have
\begin{align*} W_{C}(y)= & \sum^{14}_{j=0} a_{j}(1+y^{2})^{59-4j}(y^{2}(1-y^{2})^{2})^{j}=\\
=&
a_{0}+(59a_{0}+a_{1})y^{2}+(1711a_{0}+53a_{1}+a_{2})y^{4}+(32509a_{0}+1376a_{1}+\\
& 47a_{2}
       +a_{3})y^{6}+ (455126a_{0}+23320a_{1}+1077a_{2}+41a_{3}+ a_{4})y^{8}+\dots
\end{align*}
with $a_{j}\in \mathbb{Q}$ for $j=0,\ldots,14.$ Since the minimum
distance of $C$ is $22$, we get $a_{0}=1$, $a_{1}=-59$,
$a_{2}=1416$, $a_{3}=-17877$, $a_{4}=128679$, $a_{5}=-538375$,
$a_{6}=1291628$, $a_{7}=-1713124$,
$a_{8}=1187434$, $a_{9}=-400374$ and $a_{10}=0.$\\
Let $S$ be the shadow of $C$. Then by Theorem
\ref{polynomial-type-I} (a) we have {\small\begin{align*} W_{S}(y)=
& \frac{1}{33554432}a_{14}y^{3}+(-\frac{1}{524288}a_{13} -
\frac{7}{8388608}a_{14})y^{7}+(\frac{189}{16777216}a_{14}+\\
      & \frac{1}{8192}a_{12}+ \frac{13}{262144}a_{13}) y^{11}+\left(-\frac{819}{8388608}a_{14}-\frac{1}{128}a_{11}-\frac{325}{524288}a_{13}\right.\\
      & \left.-\frac{3}{1024}a_{12}\right)y^{15}+  \left(\frac{325}{65536}a_{13}+\frac{69}{2048}a_{12}+\frac{11}{64}a_{11}+\frac{20475}{33554432}a_{14}\right) y^{19}+ \\
      &  \left(-\frac{231}{128}a_{11}-\frac{7475}{262144}a_{13}-\frac{12285}{4194304}a_{14}-\frac{253}{1024}a_{12}+12811968\right)y^{23}+ \ldots \\
\end{align*}}
 Let $W_{S}(y)=\sum_{i=0}^{118}B_{i}y^{i}.$ By Theorem \ref{polynomial-type-I} (b) we have $B_{i} \in \{0, 1 \}$ for $i=3,7$ and at most one $B_{i}$ is nonzero for $i\leq 11$. Therefore there are four possibilities:
\begin{itemize}
\item If $B_{3}=1$, $B_{7}=0,$ $B_{11}=0$, then $a_{14} = 33554432$, $a_{13} = -14680064$, $a_{12} = 2867200 $. Since $B_{15}=-2576-\frac{1}{128}a_{11} \geq 0$, we have $B_{19}=44275+\frac{11}{164}a_{11} < 0$, a contradiction.
\item If $B_{3}=0,$ $B_{7}=1,$ $B_{11}=0$, then $a_{14} = 0$, $a_{13} = -524288$, $a_{12} = 212992$. Since $B_{15}= -299-\frac{1}{128}a_{11} \geq 0$, we have $B_{19}=4576+\frac{11}{164}a_{11} < 0.$  It is again a contradiction.
\item If $B_{3}=0,$ $B_{7}=0,$ $B_{11}\neq 0$, then $a_{14}=a_{13}=0$ and $a_{12}>0$. Therefore the system of inequalities  $B_{15}=\frac{1}{-128}a_{11}-\frac{3}{1024}a_{12}\geq 0$ and $B_{19}=\frac{69}{2048}a_{12}+\frac{11}{64}a_{11}\geq 0$ has no solutions, a contradiction.
\end{itemize}

  Hence $B_{3}=0$, $B_{7}=0$, $B_{11}=0$ and we obtain $a_{14}=a_{13}=a_{12}=a_{11}=0$. In conclusion the shadow has minimal distance $23$ and we can calculate the weight enumerators $W_{C}(y)$ and $W_{S}(y)$ (see Table \ref{polyI} and Table \ref{shadow-T}).

\begin{table}[h]
\caption{The weight enumerator of a self-dual $[118, 59, 22]$
code}\label{polyI}
  \begin{center}
  \setlength{\doublerulesep}{-1pt}
\begin{tabular}{c|c}
  \hline
    \hline
   $i$ & $A_{i}$ \\
  \hline
  22 96 & 1534767 \\
  24 94 &  25357020 \\
  26 92 &  323009424 \\
  28 90 &  3577030288 \\
  30 88 &  33041945820 \\
  32 86 &  255009210885 \\
  34 84 &  1660986238080 \\
  36 82 &  9190790517376 \\
  38 80 &  43420813336368 \\
  40 78 &  175902467952336 \\
  42 76 &  613510461769920 \\
  44 74 &  1848313759032000 \\
  46 72 &  4823479510074576 \\
  48 70 &  10929799315381752 \\
  50 68 &  21547310072116608 \\
  52 66 &  37017173713636224 \\
  54 64 &  55486969304739115 \\
  56 62 &  72637487089840296 \\
  58 60 &  83095867738716768 \\
\end{tabular}
\end{center}
\end{table}

\begin{table}[h]
 \caption{The weight enumerator of the shadow of a self-dual $[118, 59, 22]$ code}\label{shadow-T}
  \begin{center}

  \setlength{\doublerulesep}{-1pt}
\begin{tabular}{c|c}
 \hline
 \hline
  $i$ & $B_{i}$ \\
  \hline
  23 95 & 12811968 \\
  27 91 & 2201249408  \\
  31 87 & 187592982720  \\
  35 83 & 7972733942784  \\
  39 79 & 178129081470720  \\
  43 75 & 2168688143930880  \\
  47 71 & 14778320201079552  \\
  51 67 & 57459493525644288  \\
  55 63 & 129133310381938304\\
  59& 169008544553322240  \\
\end{tabular}
\end{center}
\end{table}

\begin{lemma} \label{design-shadow}  Let $C_{0}$ be the subcode of $C$ containing all codewords whose weights are multiples of 4. Then the supports of all vectors of a given weight in $C_{0}$ and in $C_{0}^{\perp}$ form a $3$-design.
\end{lemma}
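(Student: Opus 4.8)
The plan is to obtain both design assertions simultaneously from the Assmus--Mattson theorem (Theorem \ref{Assmus}), applied to the pair $(C_0^\perp, C_0)$ with $t=3$. The first task is to identify the two codes and their weight distributions. Because $C$ is of type I, $C_0$ is the index-$2$ subcode consisting of the codewords of weight $\equiv 0 \bmod 4$ (its linearity follows from $\wt(u+v)\equiv \wt(u)+\wt(v) \bmod 4$ for $u,v$ in a self-orthogonal code), so $C_0$ is a $[118,58]$ code; its minimum weight is $24$, since the weight-$22$ codewords of $C$ lie in the coset $C_2$ and not in $C_0$. Dually, $C_0^\perp$ is a $[118,60]$ code, and from the decomposition $C_0^\perp = C_0\cup C_1\cup C_2\cup C_3 = C\cup S$ its nonzero weights are exactly the even weights of $C$ (Table \ref{polyI}) together with the weights $\equiv 3 \bmod 4$ of the shadow $S$ (Table \ref{shadow-T}); in particular its minimum weight is $22$.

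The point on which the whole argument turns is the orientation in which Assmus--Mattson is invoked. The code $C_0^\perp$ has $38+19=57$ distinct nonzero weights, so taking $C_0$ as the primary code would force $s=57$, far larger than any admissible $d-t$. Instead I would take $C_0^\perp$ itself as the code (so $d=22$) and $C_0$ as its dual (so $d^\perp=24$). Then $s$ counts the nonzero weights of $C_0$ not exceeding $n-t=115$. These are precisely $24,28,32,\dots,96$, i.e.\ the weights of $C$ divisible by $4$; there are exactly $(96-24)/4+1=19$ of them, all below $115$, so $s=19$.

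With $t=3$ one has $t<d=22$ and $s=19=22-3=d-t$, so the hypothesis $s\le d-t$ holds, though with equality. Part (a) of Theorem \ref{Assmus} then gives that the supports of the vectors of any fixed nonzero weight in $C_0^\perp$ (whose weights all satisfy $22\le i\le 118$) form a $3$-design, while part (b) gives the same for the vectors of any fixed weight in $C_0$, whose weights all lie in the admissible range $d^\perp=24\le i\le n-t=115$. Together these are exactly the two statements of the lemma.

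I expect the genuine obstacle to be not the invocation of the theorem but the bookkeeping that makes it applicable: one must confirm from the tables that $C_0$ has no weights outside $\{24,28,\dots,96\}$ and, crucially, that the number of its nonzero weights is exactly $19$, since the bound $s\le d-t$ is attained with equality. This tightness also shows the result is best possible by this method: with $t=4$ the same count gives $s=19>18=d-t$, so $t=3$ is the largest value for which the argument succeeds and the designs cannot be promoted to $4$-designs along this route.
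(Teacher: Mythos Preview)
Your argument is correct and is essentially the paper's own proof: apply Assmus--Mattson to the pair $(C_0^\perp,C_0)$ with $t=3$, using $d(C_0^\perp)=22$ and the count $s=19$ of nonzero weights $24,28,\dots,96$ in $C_0$, so that $s=d-t$ with equality. Your write-up is in fact more careful than the paper's in justifying the orientation, the minimum weight of $C_0$, and the range conditions in both parts of the theorem.
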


\begin{proof} Since $C_{0}$ is a  $[118, 58, 24]$ code and $C_{0}^{\perp}=C \cup S$, then $C_{0}^{\perp}$ is a $[118, 60, 22]$ code. If $W_{C_{0}}(x,y)=\sum_{i=0}^{118}A_{i}x^{118-i}y^{i}$, then we have $|\{i\mid A_{i}\neq 0, \, 0< i \leq 115\}|=19\leq d(C_{0}^{\perp})-3$. Therefore by Theorem \ref{Assmus} the supports of the vectors of weight $i$ in $C_{0}^{\perp}$ and in $C_{0}$ form a $3$-design.
\end{proof}

\begin{question}
By Assmus-Mattson theorem the supports of the codewords of minimal
weight 22 in a self-dual $[118, 59, 22]$ code build a $3$-$(118, 22,
8885)$ design
 $\mathcal{D}$. Similarly the supports of the codewords of minimal weight 24 in a self-dual $[120, 60, 24]$ code build a $5$-$(120, 24, 8855)$ design. In \cite{CW-design} the authors showed that if $\mathcal{D}$ is a self-orthogonal $5$-$(120, 24, 8855)$ design, then $C(\mathcal{D})$ is a self-dual $[120, 60, d]$ code with $d=20$ or $d=24$. Unfortunately it was not possible to exclude the case $d=20$ and show that the code is extremal. A natural question, although very difficult, is: if $\mathcal{D}$ is a self-orthogonal $3$-$(118, 22, 8885)$ design, then is $C(\mathcal{D})$ a self-dual extremal $[118,59,22]$ code?
\end{question}

\begin{remark} Since the shadow of a self-dual $[118, 59, 22]$ code has minimal distance $23$, the existence of a self-dual
  $[120, 60, 24]$ code is equivalent to the existence of a self-dual $[118, 59, 22]$ code $C$ (In general by \cite{Rains1} the existence of an extremal code of length $24m$ is equivalent to the existence of a self-dual $[24m-2, 12m-1, 4m+2]$ code).
By Lemma \ref{design-shadow} the supports of the vectors of weight
$k$ in the shadow $S$ of $C$ form a
  $3$-$(118,k,\lambda)$ design. Therefore if there exists a $[118, 59, 22]$ code and its shadow has enumerator weight $W_{S}(y)=\sum_{k=0}^{118}B_{k}y^{k}$, then the coefficients of the shadow must satisfy the following condition of divisibility $\frac{(k-i)!}{k!}\frac{118!}{(118-i)!} \mid B_{k}$, since the terms $\lambda_{i}=B_{k}\frac{k!}{(k-i)!}\frac{(118-i)!}{118!}\in \mathbb{N}_{0} $. Actually, all the coefficients $B_{k}$ satisfy this condition (see Table \ref{shadow-condition}).
\end{remark}

\begin{table}[h]
  \caption{Parameters of the $3$-designs}\label{shadow-condition}
  \begin{center}

  \setlength{\doublerulesep}{-1pt}
{\scriptsize \begin{tabular}{c|c|c|c|c}
  \hline

  \hline
  $k$ & $\lambda_{0}=B_{k}$ & $\lambda_{1}$ & $\lambda_{2}$ & $\lambda_{3}=\lambda$ \\
  \hline
  23 & 12811968 & 2497248 & 469568 & 85008 \\
  27 & 2201249408 & 503675712 & 9327328 & 24122400 \\
  31 & 187592982720 & 49282902240 & 12636641600 & 3159160400 \\
  35 & 7972733942784 & 2364793966080 & 687205084160 & 195497998080 \\
  39 & 178129081470720 & 58873170994560 & 19121200835840 & 6099003714880 \\
  43 & 2168688143930880 & 790284662618880 & 283691930170880 & 100270423594880 \\
  47 & 14778320201079552 & 5886280080091008 & 2314263963112704 & 897774813276480 \\
  51 & 57459493525644288 & 24834187879727616 & 10612900803302400 & 4483035684153600 \\
  55 & 129133310381938304 & 60189254839039040 & 27779656079556480 & 12692429070831840 \\
  59 & 169008544553322240 & 84504272276661120 & 41891006769626880 & 20584374016109760 \\
  63 & 129133310381938304 & 68944055542899264 & 36534456783416704 & 19212085032658784 \\
  67 & 57459493525644288 & 32625305645916672 & 18404018569491456 & 10312596612215040 \\
  71 & 14778320201079552 & 8892040120988544 & 5320024004010240 & 3164497036868160 \\
  75 & 2168688143930880 & 1378403481312000 & 871810748864000 & 548639522992000 \\
  79 & 178129081470720 & 119255910476160 & 79503940317440 & 52774167279680 \\
  83 & 7972733942784 & 5607939976704 & 3930351094784 & 2744469298944 \\
  87 & 187592982720 & 138310080480 & 101663819840 & 74495040400 \\
  91 & 2201249408 & 1697573696 & 1305825920 & 1001883680 \\
  95 & 12811968 & 10314720 & 8287040 & 6643920 \\
  \hline
\end{tabular}}
\end{center}
\end{table}

Now we have the following necessary condition on the existence of an
extremal type II code of length $120$.

\begin{theorem} If no linear $[120, 60, 23]$ code with weight enumerator given in Table \ref{formally-self-dual-120} exists, then there exists no self-dual $[120, 60, 24 ]$ code.
\end{theorem}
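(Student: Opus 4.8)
The plan is to prove the contrapositive: I will show that the existence of a self-dual $[120,60,24]$ code forces the existence of a linear $[120,60,23]$ code whose weight enumerator is exactly the one recorded in Table \ref{formally-self-dual-120}. The starting point is the equivalence recalled in the preceding remark (following \cite{Rains1}): a self-dual $[120,60,24]$ code exists if and only if a self-dual $[118,59,22]$ type I code $C$ exists, and we have already established in this section that the shadow $S$ of such a $C$ has minimum distance $23$, with $W_C(y)$ and $W_S(y)$ uniquely determined and displayed in Tables \ref{polyI} and \ref{shadow-T}.

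First I would invoke Theorem \ref{child} with $m=5$: since $C$ is a $[118,59,22]$ type I code whose shadow has minimum distance $23=4\cdot 5+3$, $C$ is a child of a $[120,60,24]$ type II code. Consequently Lemma \ref{same-polynomial} applies and yields $W_{C_1}(y)=W_{C_3}(y)$, where $S=C_1\cup C_3$. This is exactly the hypothesis needed to run Lemma \ref{prop1}: with $n=118\equiv 2 \bmod 4$, the extension $C^{*}$ of $C_0^{\perp}$ defined by the cosets $(0,0,C_0),(1,0,C_2),(0,1,C_1),(1,1,C_3)$ is a formally self-dual code of length $120$, with weight enumerator $W_{C_0}(y)+y\bigl(W_{C_1}(y)+W_{C_2}(y)\bigr)+y^{2}W_{C_3}(y)$.

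Next I would identify this code explicitly. The assignment of prefixes to cosets is a group homomorphism $C_0^{\perp}/C_0\to\F_2^2$ (the three nontrivial cosets satisfy $C_1+C_2=C_3$ inside $C_0^{\perp}$), so $C^{*}$ is linear; since $\#C^{*}=\#C_0^{\perp}=2^{60}$, it has dimension $60$. To obtain the weight enumerator I would decompose $W_C(y)=W_{C_0}(y)+W_{C_2}(y)$ according to the residue of the weight modulo $4$ (the words of weight $\equiv 0$ lie in $C_0$, those of weight $\equiv 2$ in $C_2$), and use $W_{C_1}(y)=W_{C_3}(y)=\tfrac12 W_S(y)$, which follows from $W_S=W_{C_1}+W_{C_3}$ together with $W_{C_1}=W_{C_3}$. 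Substituting the entries of Tables \ref{polyI} and \ref{shadow-T} into the formula then produces the weight enumerator recorded in Table \ref{formally-self-dual-120}. A short inspection of the lowest-degree terms shows the minimum distance is $23$: the term $y\,W_{C_2}(y)$ contributes $A_{22}(C)\,y^{23}$ coming from the weight-$22$ words of $C$, while $W_{C_0}$, $y\,W_{C_1}$ and $y^{2}W_{C_3}$ only contribute in degree $\geq 24$.

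The conceptual steps are immediate once Theorem \ref{child} and Lemmas \ref{same-polynomial} and \ref{prop1} are in hand; the main labor, and the only place an error could slip in, is the bookkeeping of the weight enumerator — correctly splitting $W_C$ by weight class modulo $4$, halving $W_S$, and shifting by the factors $y$ and $y^{2}$ — so that every coefficient in Table \ref{formally-self-dual-120} is verified. I would also double-check that the coset-labeling genuinely makes $C^{*}$ linear and that no codeword of weight below $23$ is introduced, since the whole conclusion rests on $C^{*}$ being an honest linear $[120,60,23]$ code with the prescribed weight enumerator.
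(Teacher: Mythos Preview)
Your proof is correct and follows essentially the same route as the paper: both arguments produce a self-dual $[118,59,22]$ type I code $C$ that is a child of an extremal $[120,60,24]$ code, invoke Lemma \ref{same-polynomial} to obtain $W_{C_1}=W_{C_3}$, and then apply Lemma \ref{prop1} to build the formally self-dual $[120,60,23]$ code $C^{*}$ with the tabulated weight enumerator. The only difference is that the paper simply takes a child of the assumed $[120,60,24]$ code directly, whereas you pass through the Rains equivalence to obtain $C$ and then appeal to Theorem \ref{child} to recover the child relationship---a small and unnecessary detour, but not a different approach; your added verification that $C^{*}$ is linear of dimension $60$ and has minimum distance exactly $23$ is a welcome bit of care that the paper leaves implicit.
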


\begin{proof} A \ self-dual \ $[120,60, 24]$ \ code \ has \ a \ child \ $C$, which is a self-dual $[112,59,22]$ code. By Lemma \ref{same-polynomial}, $W_{C_{1}}(y)=W_{C_{3}}(y)$. Therefore the code $C^\ast$ defined as in Lemma \ref{prop1} is formally self-dual, and the theorem follows (the weight enumerator of the code $C^\ast$ is given in Table \ref{formally-self-dual-120}. It is calculated thanks to  Lemma \ref{prop1}, knowing the weight enumerator of the $[118, 59, 22]$ type I code and of its shadow, which are given in Table \ref{polyI} and Table \ref{shadow-T} respectively).
\end{proof}

\begin{table}[h]
\caption{The weight enumerator of a formally self-dual $[120, 60,
23]$ code}\label{formally-self-dual-120}
\begin{center}
\setlength{\doublerulesep}{-1pt}
\begin{tabular}{r r|r r }
  \hline

  \hline
  Weight & Weight distribution & Weight & Weight distribution \\
  \hline
  0 & 1 &   61 & 84504272276661120  \\
  23 & 1534767 &   63 & 72637487089840296  \\
  24 &31763004 &  64 & 120053624495708267  \\
  25 &6405984 &   65& 64566655190969152 \\
  27 &323009424 &  67& 37017173713636224 \\
  28 & 4677654992&  68& 50277056834938752 \\
  29 & 1100624704&  69& 28729746762822144 \\
  31&33041945820 &  71& 10929799315381752  \\
  32&348805702245 &  72& 12212639610614352 \\
  33& 93796491360&  73& 7389160100539776 \\
  35& 1660986238080 &  75& 1848313759032000 \\
  36&13177157488768 &  76& 1697854533735360 \\
  37& 3986366971392&  77& 1084344071965440 \\
  39&43420813336368 &  79& 175902467952336 \\
  40&264967008687696 &  80& 132485354071728 \\
  41&89064540735360 &  81& 89064540735360 \\
  43&613510461769920 & 83& 9190790517376 \\
  44&2932657830997440 &  84& 5647353209472 \\
  45&1084344071965440 &  85& 3986366971392 \\
  47&4823479510074576 &  87& 255009210885 \\
 48&18318959415921528 &  88& 126838437180\\
 49& 7389160100539776&  89& 93796491360\\
 51& 21547310072116608&  91& 3577030288\\
 52&65746920476458368 &  92& 1423634128\\
 53&28729746762822144 &  93& 1100624704\\
 55&55486969304739115 &  95& 25357020\\
 56&137204142280809448 &  96& 7940751\\
 57&64566655190969152 &  97& 6405984\\
 59&83095867738716768 &  \\
 60&167600140015377888 &  \\
  \hline
 \end{tabular}
\end{center}
\end{table}

\begin{theorem}  If no self-dual doubly-even $[120, 60, 4]$ code with weight enumerator given in Table \ref{nachbar-120} exists, then there exists no self-dual $[120, 60, 24]$ code.
\end{theorem}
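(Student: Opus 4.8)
The plan is to prove the contrapositive: assuming a self-dual (hence doubly-even) $[120,60,24]$ code $C$ exists, I would produce a doubly-even self-dual $[120,60,4]$ code whose weight enumerator is \emph{forced} to be the one in Table~\ref{nachbar-120}. The code will be built as a \emph{neighbor} of $C$ hung on a single vector of weight $4$. Concretely, fix any $v\in\F_2^{120}$ with $\wt(v)=4$. Since $d(C)=24$ we have $v\notin C=C^\perp$, so $\langle\,\cdot\,,v\rangle$ is a nonzero functional on $C$ and $D:=\{c\in C\mid\langle c,v\rangle=0\}$ has dimension $59$. Set $C':=\langle D,v\rangle=D\cup(D+v)$, of dimension $60$. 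As $D$ is self-orthogonal, $v\perp D$, and $\langle v,v\rangle=\wt(v)\equiv 0$, the code $C'$ is self-orthogonal of dimension $60=120/2$, hence self-dual. It is moreover doubly-even: for $d\in D$ the overlap $|\supp(d)\cap\supp(v)|$ is even, whence $\wt(d+v)=\wt(d)+4-2|\supp(d)\cap\supp(v)|\equiv\wt(d)\equiv 0\ (\mathrm{mod}\ 4)$.

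The core of the argument is to compute the low-weight coefficients $A_0,A_4,A_8,A_{12},A_{16},A_{20}$ of $W_{C'}$ and to observe that they are independent of the choice of $v$. Weight-$w$ words of $C'$ in the coset $D+v$ satisfy $\wt(d+v)=\wt(d)+4-2o$ with $o:=|\supp(d)\cap\supp(v)|\in\{0,2,4\}$, while any nonzero $d\in D\subseteq C$ has $\wt(d)\ge 24$. For $w=4$ this forces $d=0$ (giving $v$ itself), so $A_4(C')=1$ and, $C'$ being doubly-even, $d(C')=4$ exactly. For $w\in\{8,12,16\}$ one needs $\wt(d)\in\{4,8,12\},\{8,12,16\},\{12,16,20\}$ respectively, all below $24$, so $A_8=A_{12}=A_{16}=0$. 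For $w=20$ one needs $\wt(d)=16+2o$, which is admissible only for $o=4$, i.e.\ $\wt(d)=24$ with $\supp(v)\subseteq\supp(d)$.

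The step I expect to be the crux is identifying $A_{20}(C')$ as a \emph{design invariant}. By the Assmus--Mattson theorem (Theorem~\ref{Assmus}) applied with $t=5$, the supports of the weight-$24$ codewords of the extremal code $C$ form a $5$-$(120,24,8855)$ design; since a binary codeword is determined by its support, $A_{20}(C')$ equals the number of blocks of this design through the fixed $4$-set $\supp(v)$, namely $\lambda_4=8855\cdot\binom{116}{1}/\binom{20}{1}=51359$, a constant independent of $v$. Thus $A_0=1,\ A_4=1,\ A_8=A_{12}=A_{16}=0,\ A_{20}=51359$ are completely determined.

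Finally I would invoke Gleason's theorem (Theorem~\ref{Gleason-theorem}(b)): $W_{C'}$ is a rational combination of the six polynomials $g_2(x,y)^{15-3i}g_3(x,y)^{i}$ for $i=0,\dots,5$, so it is determined by its coefficients at weights $0,4,8,12,16,20$ (these give a triangular, hence invertible, linear system in the six unknowns). Substituting the values above pins $W_{C'}$ down to the enumerator recorded in Table~\ref{nachbar-120}. Hence the existence of $C$ yields a doubly-even self-dual $[120,60,4]$ code with precisely that weight enumerator, which is the contrapositive of the statement. The only genuinely delicate point is the design count; the verification of self-duality, double-evenness, and the reconstruction of the full enumerator from six coefficients are routine.
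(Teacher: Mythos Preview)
Your proposal is correct and follows essentially the same approach as the paper: build the neighbor $C'=\langle D,v\rangle$ of $C$ with $D=C\cap\langle v\rangle^\perp$ for a weight-$4$ vector $v$, verify it is self-dual and doubly-even, compute $A_0,\dots,A_{20}$ (using the Assmus--Mattson $5$-design to get $A_{20}=\lambda_4=51359$), and then invoke Gleason's theorem to determine the full enumerator. Your write-up is in fact more explicit than the paper's on the vanishing of $A_8,A_{12},A_{16}$ and on why six coefficients suffice to pin down $W_{C'}$.
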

\begin{proof} Let $C$ be a self-dual $[120, 60, 24]$ code, $u\in \mathbb{F}_{2}^{120}$ of weight $\wt(u)=4$ and $D:=C \cap \langle u \rangle^\perp$. Since $D^\perp=C^\perp+ (\langle u\rangle^\perp)^\perp= C \oplus \langle u\rangle$, then $D \leq D^\perp$ and $\dim D^\perp=\dim(C+\langle u\rangle)=61$ i.e. $D$ is a self-orthogonal $[120, 59]$ code. If $N:=\langle D, u \rangle$, then $N= D \oplus \langle u\rangle$ and $\dim N=60$. Since $(d_1+u)\cdot (d_2+u)=0$, for all $d_1, d_2 \in D$, we have that $N$ is self-orthogonal and therefore it is self-dual. On the other hand, $\wt(d+u) \equiv 0 \mod 4$, because $d\cdot u=0$ for all $d \in D$. Therefore $N$ is a self-dual doubly-even $[120, 60, 4]$ code.
  Since $\dim(N \cap C)=59$, we have $N$ is neighbor of $C$. The code $N$ has only a vector of weight $4$. To determine the number of vectors of weight $20$ it is sufficient to calculate the number of vectors $w \in C$ of weight $24$ with $|\supp(w)\cap \supp(u)|=4$. Since the vectors of weight $24$ in $C$ form a $5$-design, then this number is equal to $\lambda_{4}=51359$ where $\lambda_{4}$ is the number of blocks incident with 4 different points. If $\sum A_{i}y^i$ is the weight enumerator of $N$, then $A_{0}=1$, $A_{4}=1$, $A_{8}=0$, $A_{12}=0$, $A_{16}=0$ and $A_{20}=51359$. Therefore by Theorem \ref{Gleason-theorem} we obtain the weight enumerator of the code $N$, which is given in Table \ref{nachbar-120}. This concludes the proof.
\end{proof}

\begin{table}[h]
\caption{The weight enumerator of a $[120, 60, 4]$ neighbor of an
extremal $[120,60, 24]$ code}\label{nachbar-120}
\begin{center}
\setlength{\doublerulesep}{-1pt}
\begin{tabular}{c|c}
  \hline
    \hline
   $i$ & $A_{i}$ \\
  \hline
  0 120 & 1 \\
  4  116 &  1 \\
  20 100 &  51359 \\
  24 96 &  43481179 \\
  28 92 &  6539254776 \\
  32 88 &   494044041905 \\
  36 84 &  19178964940125 \\
  40 80 &  400399951557816 \\
  44 76 &  4639015235035296 \\
  48 72 &   30526043817770504\\
  52 68 &   115980280893408771\\
  56 64 &   257259077150523955\\
  60    &   335272511326715600\\
\end{tabular}
\end{center}
\end{table}

\section{Acknowledgements}

The first author was partially supported by PEPS - Jeunes
Chercheur-e-s - 2017. The second author was financially supported by
the Fundaci\'on para la Promoci\'on de la Investigaci\' on y la
Tecnolog\'ia, in the frame of the project no. 201505. Part of this
work was done while he was in the University of Zurich supported by
the Swiss Confederation through the Swiss Government Excellence
Scholarship no. 2016.0873.


\begin{thebibliography}{00}

\bibitem{AssmusKey} E.F.~Assmus, Jr. and J.D.~Key, Designs and Their Codes, \textit{Cambridge University Press} (1992).

\bibitem{Assmus-paper} E.F.~Assmus, Jr. and H.F.~Mattson Jr., New 5-designs, \textit{ J. Combin. Theory} {\bf 6} (1969) 122-151.

\bibitem{Turyn} E.F.~Assmus, Jr., H.F.~Mattson, J.r., and R.J.~Turyn, Research to develop the algebraic theory of codes, Air Force
Cambridge Res. Labs., Bedford, MA, Report AFCRL-67-0365, (1967).


\bibitem{formally-dual} K.~Betsumiya, M.~Harada, Formally Self-Dual Codes Related to Type II Codes, \textit{ Appl. Algebra Eng. Commun. Comput.} {\bf 14}(2)(2003) 81-88.

\bibitem{sd40} K.~Betsumiya, M.~Harada, and A.~Munemasa, A complete classification of doubly even self-dual codes of length 40, \textit{Electron. J. Combin.} {\bf 19}(3)(2012):P18 (12 pp.).

\bibitem{Borello6} M.~Borello, The automorphism group of a self-dual [72, 36, 16] binary
code does not contain elements of order 6, \textit{IEEE Trans. Inf.
Theory}, {\bf 58}, no. 12, (2012), 7240-7245.

\bibitem{Bor-W-2p} M.~Borello and W.~Willems, Automorphisms of Order $2p$ in Binary Self-Dual
Extremal Codes of Length a Multiple of 24, \textit{IEEE Trans.
Inform. Theory}, {\bf 59}, No. 6 (2013) 3378-3383.

\bibitem{Borello-Volta-Nebe} M.~Borello, F.~Dalla~Volta, and G.~Nebe, The automorphism group of a
self-dual [72, 36, 16] code does not contain $S_3$, $A_4$, or $D_8$,
\textit{Adv. Math. Commun.,} {\bf 7}, no. 4, (2013) 503-510.

\bibitem{Borello8} M.~Borello, The automorphism group of a self-dual [72, 36, 16] code is
not an elementary Abelian group of order 8, \textit{Finite Fields
Their Appl.}, {\bf 25}, (2014) 1-7.

\bibitem{Bor-Nebe-Involution} M.~Borello and G.~Nebe, On involutions in extremal self-dual codes and the dual distance of semi self-dual codes, \textit{Finite Fields and Their Applications},
{\bf 33}, (2015) 80--89.

\bibitem{Bor-summary} M.~Borello, On automorphism groups of binary linear codes, \textit{Topics in finite fields,
Contemp. Math.}, \textbf{632}, Amer. Math. Soc., Providence, RI,
(2015) 29–-41.


\bibitem{magma}
    W.~Bosma, J.~Cannon, C.~Playoust, The \textsc{Magma} algebra system \textsc{I}: The user
    language, \textit{J. Symbol. Comput.}, {\bf 24} (1997) 235--265.

\bibitem{Bouy-24m} S.~Bouyuklieva, On the automorphisms of order 2 with fixed points for
the extremal self-dual codes of length 24m,\textit{Des., Codes,
Cryptograph.}, {\bf 25}, no. 1, (2002) 5-13.

\bibitem{B-DC-W} S.~Bouyuklieva, J.~de~la~Cruz, W.~Willems, On the automorphism group of a binary self-dual [120,60,24] code, \textit{AAECC}, {\bf 24}(3-4)(2013) 201-214.

\bibitem{B-W-Y} S.~Bouyuklieva, W.~Willems, N.~Yankov, On the automorphisms of order 15 for a binary self-dual $[96, 48, 20]$ code, \textit{Designs Codes and Cryptography}, {\bf 79}, (2016) 171-182.

\bibitem{chou} L.~Chouinard, Projectivity and relative projectivity over group rings, \textit{J. Pure Appl. Algebra} \textbf{7} (1976) 278–-302.

\bibitem{conway-sloane} J.H.~Conway and N.J.A.~Sloane, A New Upper Bound on the Minimal Distance of Self-Dual Codes, \textit{IEEE Trans. Inform. Theory}, {\bf 36}, No.
6 (1990).

\bibitem{C} J.~de~la~Cruz, On extremal self-dual codes of length 120, \textit{Des. Codes and Cryptography}, Vol. {\bf 75} (2015) 243-252.

\bibitem{CKW} J.~de~la~Cruz, M.~Kiermaier, A.~Wassermann, The automorphism group of an extremal $[120, 60, 24]$ code does not contain elements of order 29, \textit{Des. Codes and Cryptography} (2014) 1-10.

\bibitem{DC-W-96} J.~de~la~Cruz and W.~Willems, On Extremal Self-Dual Codes of Length 96, \textit{IEEE Trans. Inf. Theory}, {\bf 57}, no. 10 (2011), 6820-6823.

\bibitem{CW-design} J.~de~la~Cruz, W.~Willems, 5-Designs related to binary extremal self-dual codes of length 24m, \textit{Theory an applications of finite fields, Contemp. Mathe}, {\bf 579} (2012) 75-80.

\bibitem{Dontcheva} R.~Dontcheva, On the Doubly-Even Self-Dual
Codes of Length 96, \textit{IEEE Trans. Inform. Theory}, {\bf 48},
no. 2, (2002), 557-561.

\bibitem{Dougherty} S.T.~Dougherty, The Search for the $[24k, 12k, 4k+4]$ Extremal Type II Code, Scranton, PA 18510
November 29, 2006.

\bibitem{Gleason} A.M.~Gleason, Weight polynomials of codes and the MacWilliams identities, \textit{Actes Congr\'es Intern. de Math.}, Gauthier-Villars, Paris, {\bf 3} (1971) 211-215.

\bibitem{Huffman} W.C.~Huffman, Automorphisms of Codes with Applications
to Extremal Doubly Even Codes of Length $48$, \textit{IEEE Trans.
Inform. Theory} IT-28 (1982) 511-521.

\bibitem{Pless} G.T.~Kennedy and V.S.~Pless, A coding theoretic approach to extending designs,
 \textit{Discrete Appl. Math.} {\bf 142} (1995) 155-168.

\bibitem{Mallows} C.L.~Mallows, N.J.A.~Sloane,
An upper bound for self-dual codes, \textit{Information and Control}
{\bf 22} (1973) 188-200.

\bibitem{clasifi} V.~Pless, A classification of self-orthogonal
codes over $GF(2)$, \textit{Discr. Math.}, {\bf 3} (1972) 209-246.

\bibitem{Rains1} E.M.~Rains, Shadow Bounds for Self-Dual Codes, \textit{IEEE Trans. Inform. Theory}, {\bf 44} (1998) 134-139.

\bibitem{pregunta} N.J.A.~Sloane, Is there a $[72, 36]$, $d=16$ self-dual code?
\textit{IEEE Trans. Inform. Theory}, {\bf 19} (1973) 251.

\bibitem{Wasserman} R.~Yorgova, A.~Wassermann, Binary
self-dual codes with automorphisms of order $23$, \textit{Des. Codes
and Cryptography} {\bf 48} (2008) 155-164.

\bibitem{Zhang} S.~Zhang, On the nonexistence of extremal self-dual
codes, \textit{Discrete Appl. Math.} {\bf 91} (1999) 277-286.

\end{thebibliography}
\end{document}